\theoremstyle{plain}
\newtheorem{lemma}{Lemma}
\theoremstyle{definition}
\newtheorem{protocol}{Protocol}
\def\bF{\mathbb{F}}
\def\Si{\mathop{\rm Si}}
\newcommand{\ket}[1]{|#1\rangle}
\newcommand{\Tr}{{\rm Tr}\,}
\newcommand{\red}[1]{{\color{black}#1}}
\newcommand{\sld}{ {\mathrm{SLD}}}
\newcommand{\rld}{ {\mathrm{RLD}}}
\newcommand{\tr}{ {\mathrm{Tr}}}
\def\Label#1{\label{#1}\ [\ \text{#1}\ ]\ }
\def\Label{\label}
\begin{document}
%\title{Noise threshold for the achievement of the Heisenberg scaling in the phase estimation}
\title{Global Heisenberg scaling in noisy and practical phase estimation}
\author{Masahito Hayashi}
\email{hayashi@sustech.edu.cn}
\affiliation{Shenzhen Institute for Quantum Science and Engineering, Guangdong Provincial Key Laboratory of Quantum Science and Engineering, Southern University of Science and Technology, Shenzhen, 518055, China}
\affiliation{Graduate School of Mathematics, Nagoya University, Furocho, Chikusa-ku, Nagoya 464-860, Japan}
\author{Zi-Wen Liu}
\email{zliu1@perimeterinstitute.ca}
\affiliation{Perimeter Institute for Theoretical Physics, Waterloo, Ontario  N2L 2Y5, Canada}
\author{Haidong Yuan}
\email{hdyuan@mae.cuhk.edu.hk}
\affiliation{Department of Mechanical and Automation Engineering, The Chinese University of Hong Kong, Shatin, Hong Kong SAR, China}

\begin{abstract}
%The precision limit of the phase estimation can achieve the Heisenberg scaling under the unitary evolution. The Heisenberg scaling, however, is very fragile to the noises. Here we investigate whether the Heisenberg scaling can be maintained in the presence of the phase damping noise. We show that when the strength of the noise is below a threshold, the Heisenberg scaling can be achieved, on the other hand if the strength of the noise is above the threshold, the precision can only scale as the shot noise limit.     \red{Emphasize global somehow.}  
Heisenberg scaling characterizes the ultimate precision of parameter estimation enabled by quantum mechanics, which represents an important quantum advantage of both theoretical and technological interest. 
Here, we study the attainability of strong, global notions of Heisenberg scaling in the fundamental problem of phase estimation, from a practical standpoint.  A main message of this work is an asymptotic noise ``threshold'' for global Heisenberg scaling.   We first demonstrate that Heisenberg scaling is fragile to noises in the sense that it cannot be achieved in the presence of phase damping noise with strength above a stringent scaling in the system size.  Nevertheless, we show that when the noise does not exceed this threshold, the global Heisenberg scaling in terms of limiting distribution (which we highlight as a practically important figure of merit) as well as average error can indeed be achieved.   Furthermore, we provide a practical adaptive protocol using one qubit only, which achieves global Heisenberg scaling in terms of  limiting distribution under such noise. 
\end{abstract}

\maketitle
\section{Introduction}

The estimation of unknown parameters such as phases in quantum systems, which is also widely studied under the names of quantum metrology, sensing, interferometry etc.\ in recent years \cite{PhysRevLett.96.010401,Giovannetti2011,RevModPhys.89.035002,PhysRevD.23.1693}, is a problem of fundamental importance in quantum information science \cite{Kitaev,NC}, as well as an exciting technological frontier with promising potential for practical applications in wide-ranging scenarios involving high-precision measurements such as spectroscopy, gravitational wave detection, and atomic clocks \cite{PhysRevA.54.R4649,PhysRevLett.88.231102,RevModPhys.87.637}. 
A central observation of this area is that by utilizing quantum mechanical effects such as superposition, entanglement and squeezing, quantum estimation can potentially attain precision which scales as $n^{-1}$ where $n$ is the resource count (e.g.~the number of channel uses or the probing time), namely the Heisenberg scaling \cite{Giovannetti1330,PhysRevLett.96.010401}.  In contrast, one can only attain the scaling of $n^{-1/2}$ (also known as the shot-noise or standard quantum limit) with classical resources.  This indicates a significant quantum enhancement in metrology and estimation tasks, which is a representative type of practical advantages of quantum information technologies.

However, quantum systems are very susceptible to the realistically ubiquitous noise effects, which stand as a fundamental obstacle towards practical quantum applications \cite{NC,Preskill2018quantumcomputingin}. Therefore, a research direction of central importance is to understand the limitations of quantum information processing, especially to what extent the theoretically blueprinted quantum advantages can be maintained, when noises are taken into account.  Ideally, for the standard phase estimation problem, where we aim to estimate the phase $\theta$ in the signal unitary $U_\theta:= e^{i\theta \sigma_z}$, it is well known that the Heisenberg scaling can be achieved in various settings \cite{Giovannetti1330,Higgins2007}.
Nevertheless, the estimation precision is naturally expected to deteriorate under noise effects, leading us to the following important and highly nontrivial question: When can Heisenberg scaling still be achieved in the presence of noises?

%Another key point of this work is that we consider Heisenberg scaling for global estimation, namely...
%We also study strong notions, thus addressing several serious limitations of previous studies.

\begin{figure}
        \centering
\includegraphics[width=0.9\linewidth]{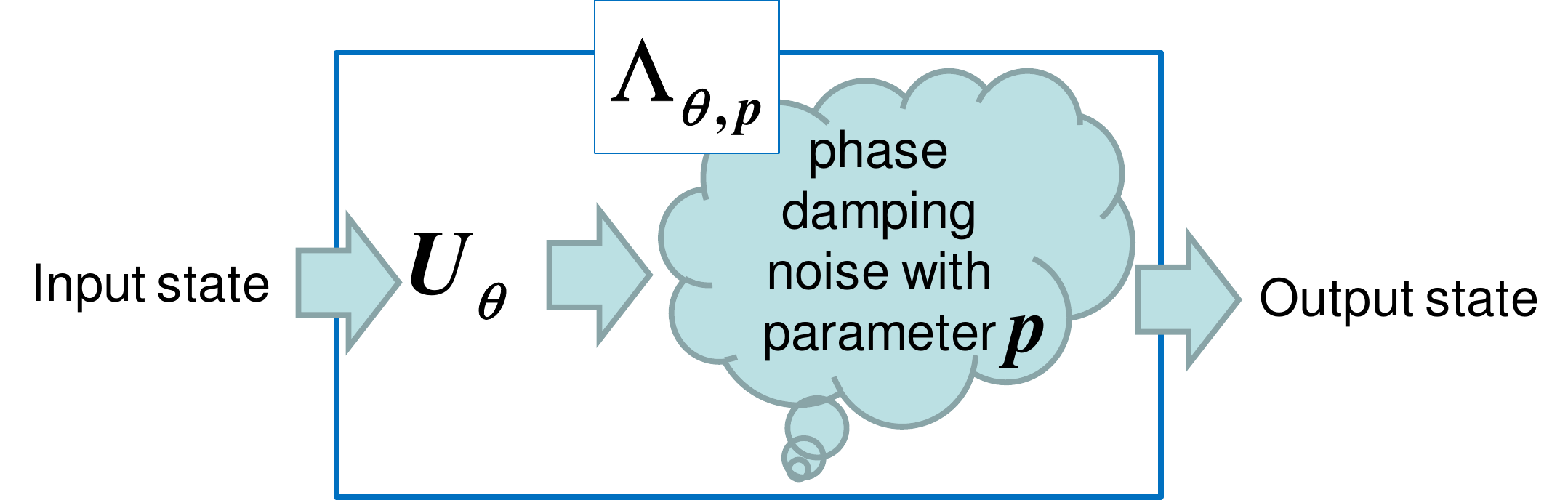}
\caption{Our model: The parameterized channel to be estimated $\Lambda_{\theta,p}$ is  the signal unitary $U_\theta$ affected by a phase damping noise
with parameter $p$.}
\Label{FF0}
\end{figure}

In this work, we address this general question by studying the necessary and sufficient conditions for achieving Heisenberg scaling in phase estimation, in the presence of the fundamental phase damping noise as illustrated in Fig. \ref{FF0}.
More specifically, we derive a strong upper bound on the noise strength, and further address the achievability when the bounds are satisfied by constructing explicit protocols.  Here, in particular, we consider a strong notion of estimation in terms of global precision over all possible values of the phase that is broadly important in practical applications, while most previous work only consider the local notion.  
Notably, the most widely studied lens for quantum metrology is the quantum Fisher information (QFI), 
which nevertheless only characterizes the local estimation precision at a given point and is generally insufficient for scenarios in which global estimation is of interest (see more detailed discussions later).  %\red{[explain ``local'' better]}  

The key contributions of this work are more specifically summarized as follows.  We first formally lay down two sets of natural criteria for global Heisenberg scaling, respectively based on the average error and the notion of limiting distribution \cite{IH09}.  In particular, the limiting distribution is a powerful notion that provides more information than the error measures commonly considered in metrology, allowing us to directly analyze confidence intervals and success probabilities. However, the study of it in quantum metrology is very limited (see also \cite{YCH}).
By explicitly analyzing the behavior of QFI under phase damping, we derive a $O(n^{-1})$ upper bound on the noise strength, which is necessary for Heisenberg scaling.  
On the other hand, when this bound is satisfied, we show that both notions of global Heisenberg scaling can indeed be achieved (a key tool being Fourier analysis), indicating that the $O(n^{-1})$ bound is optimal in a strong sense.   We also construct a practically friendly protocol that resorts to only single-qubit memories by modifying the well known phase estimation algorithm in \cite{Cleve} and show that it achieves global Heisenberg scaling in terms of limiting distribution. 
Note that previous work \cite{ZhouJiang21} implies that Heisenberg scaling cannot be achieved under any fixed strength of phase damping.  Here we extend the consideration to $n$-dependent noise to sharpen this understanding, and also first present protocols that actually achieves Heisenberg scaling under phase damping. Also note that our protocols are not based on quantum error correction as is commonly considered (see e.g.~\cite{PhysRevLett.112.150801,PhysRevLett.112.150802,PhysRevLett.112.080801,2013arXiv1310.3432O,PhysRevX.7.041009,Zhou2018,PhysRevLett.122.040502,Gorecki2020optimalprobeserror}) and thus broadens the methodology for quantum metrology in noisy scenarios. %\red{[[Perhaps more intro on previous work depending on space]]}

%(as discussed in e.g.~[cite]; see next section for more)

\section{Criteria for global Heisenberg scaling}

Here we discuss our global notions of Heisenberg scaling in detail.

As mentioned, a commonly considered but limited figure of merit for quantum metrology is the quantum Fisher information (QFI).
More specifically, the symmetric logarithmic derivative (SLD) QFI is given by $J_{\theta}^{\sld}=\tr(\rho_{\theta}L_S^2)$, where  $\rho_{\theta}$ is the the state carrying the parameter $\theta$ and $L_S$ is the SLD operator which can be obtained from the equation 
   $ \frac{\partial \rho_{\theta}}{\partial \theta}=\frac{1}{2}(L_S\rho_{\theta}+\rho_{\theta}L_S)$.
Then the quantum Cram\'{e}r-Rao bound gives a lower bound on the estimation error as measured by the standard deviation in terms of QFI \cite{Helstrom,HolevoP}: $\delta \hat{\theta}\geq \frac{1}{\sqrt{m J_{\theta}^{\sld}}}$, %where $\hat{\theta}$ is assumed to be an unbiased estimator.  
where $\delta \hat{\theta}=\sqrt{E[( \hat{\theta}-\theta)^2]}$ is the standard deviation, and $m$ is the number of times that the measurement is repeated.  Here, importantly, $\hat{\theta}$ is assumed to be an unbiased estimator (whose expected value equals the true value).
%\section{Various measure for Heisenberg scaling}
In the literature, the Heisenberg scaling is often considered in terms of the QFI scaling as $n^2$ where $n$ is the number of channel uses, as this indicates that $\delta \hat{\theta}$ scales as $1/n$ due to the quantum Cram\'{e}r-Rao bound. 
However, the QFI only bounds the local precision at a single point, while global notions that consider all possible values of the parameter are often important and more meaningful {as the true value of the parameter is supposed to be unknown}. %(like in Bayesian estimation).
The optimal local estimator in general does not work globally, as previously pointed out in e.g.~\cite{H,H2}.  In fact, even in a neighborhood of $\theta_0$, it does not work with respect to the minimax criterion (where one considers the worst point in the neighborhood) when the radius of the neighborhood of $\theta_0$ is a constant \cite{H}.
When the minimum \red{mean square} error of local estimation scales as $O(n^{-1})$,
it can be attained globally by using various adaptive methods including two-step methods \cite{H}.
However, the proof of the reduction statement does not work when the scaling is $O(n^{-1-\delta})$
for any $\delta>0$. 
Furthermore, it is known that in the parallel scheme
the minimum error for global phase estimation can be strictly larger than the inverse of the maximum QFI \cite{HVK,H}. 
This shows the necessity of a new method for global estimation.
%It is often important (like in Bayesian estimation) to consider global notions of precision and Heisenberg scaling for all possible values of the parameter. 
Therefore, the $n^2$ scaling of QFI does not mean that it is possible to construct an estimator that can achieve the Heisenberg scaling globally, even with adaptive estimation.  We refer interested readers to e.g.~\cite{HVK,H,H2} for more discussions on this issue. 

%Here we address the question of whether the Heisenberg scaling can be achieved in terms of the mean square error in global estimation.%Therefore, it is needed to address whether Heisenberg scaling is achieved in terms of mean square error. 
We would like to rigorously study the attainability of global notions of Heisenberg scaling, for which it is not sufficient to consider QFI (although it can lead to simple necessary conditions, as will be discussed later).
%First, due to the Cram\'{e}r-Rao bound, under a natural condition, the mean square error, denoted as $v_n$, is lower bounded by the mean of the inverse Fisher information. Hence, if the Heisenberg scaling can be achieved in terms of the mean square error, it can also be achieved in terms of the Fisher information, i.e., the Heisenberg scaling in terms of the Fisher information is a necessarily condition for the Heisenberg scaling in terms of the mean square error. We will first provide a necessary condition for the Fisher information to attain the Heisenberg scaling, then provide a construction to show that under the same condition the Heisenberg scaling in terms of the mean square error can be achieved.  
Here we consider two types of figure of merit. The first is the average error over all possible values of the parameter.  
For our phase estimation problem where $\theta\in(-\pi,\pi]$, 
considering periodicity, we focus on e.g.~the error function 
$\tilde{R}_{\theta}: = \mathbb{E}_{\hat\theta}[ \sin^2(\hat\theta-\theta)]$, where
$\mathbb{E}_{\hat\theta}$ denotes the expectation with respect to $\hat\theta$.
Then, we take its average with respect to the uniform prior distribution over the range of $\theta$:
\begin{align}
\tilde{R}:=\frac{1}{2\pi} \int_{-\pi}^{\pi}  R_{\theta} d\theta,
\end{align}
We say the Heisenberg scaling is achieved when $R$ scales as $1/n$. 
The second figure of merit, which is practically more important but little understood, is the 
probability that the error exceeds a certain threshold $c$, namely 
$P_{\theta} \{ |\hat{\theta}-\theta| > c\}$.
 %, where $\hat\theta$ is the estimation.   
When the threshold $c$ is a constant, this is just the large deviation analysis \cite{LD}.
Here we are interested in the case where the limiting probability is constant. This is in general only possible when the threshold $c$ changes with $n$ and 
%The other choice is the scaling choice for the threshold $c$, where the limiting probability is a constant.
%\red{[[Maybe define what exactly is the limiting distribution; Be more careful with the absolute value symbol]]}
the Heisenberg scaling means
the threshold $c$ has scaling $O(n^{-1})$.
To be more precise, we say that the Heisenberg scaling  in terms of limiting distribution is achieved if ${P}_{\theta} \Big\{
\frac{a}{n} \le \hat\theta -\theta \le \frac{b}{n} \Big\}$ converges to 
an non-trivial value (neither $0$ nor $1$) 
for any two real numbers $a<b$, in this case we can define the limiting distribution $\bar{P}_\theta$ as
$\bar{P}_\theta(a,b):=\lim_{n \to \infty}P_{\theta} \{
\frac{a}{n} \le \hat\theta -\theta \le \frac{b}{n} \}$.
%where the phase damping noise probability $p_n$ depends on $n$ in general.
The limiting distribution is more informative about the estimation
%because of the following advantage;
%In practise, we are more interested in the 
as it can be used to calculate the error probability exceeds a certain threshold $c$, $P_{\theta} \{ |\hat{\theta}-\theta| > c\}$,
which is widely used in practice.
%that the error exceeds a certain threshold $c$.
%Once we obtain the limiting distribution $\bar{P}_\theta$,
%we can approximately calculate the probability.}
%\red{[[Maybe motivate more, maybe take some motivations from Imai--Hayashi; E.g. allows us to directly understand the success probability]]} [cite].  
%However, the studies in quantum metrology in terms of the limiting distribution is very limited \cite{YCH}. % the recent paper \cite{YCH} addressed the quantum state estimation in terms of the probability distribution with $a_n=\sqrt{n}$. 
%We shall be interested in the achievement of the Heisenberg scaling in terms of both conditions.  
Note that the global Heisenberg scalings under these two figures of merit are slightly different, as will be seen later.

\begin{comment}
Note that using results in Refs.~\cite{Holevo,HolevoP},
the existing study Ref.~\cite{IH09} clarifies how the global Heisenberg scaling is achieved in both senses in the noiseless case.
However, it does not discuss the asymptotics of the average error 
of the above type of estimator even in the noiseless case. \red{[[The above two sentences are confusing. Rewrite.]]}
We refer readers to the companion paper \cite{} for an extended discussion and some new results for the noiseless case.
\end{comment}

%\section{Necessary condition from quantum Fisher information}

\section{Global phase estimation under noise}
We now present our results on the attainability of global Heisenberg scaling in the presence of noise.
We consider a model where the signal unitary is given by
$U_\theta:= e^{i \theta/2}|0\rangle \langle 0|
+e^{-i \theta/2}|1\rangle \langle 1|$  (where $\theta \in (-\pi,\pi]$)
on the system ${\cal H}$ spanned by $\{|0\rangle,|1\rangle\}$, and there is a phase damping noise $\mathcal{N}_p(\rho) = (1-p)\rho + p |0\rangle \langle 0| \rho |0\rangle \langle 0| 
+p |1\rangle \langle 1| \rho |1\rangle \langle 1|$ with dephasing probability or strength $p \in [0,1]$ which describes the natural decoherence effect, acting before or after the application of
$U_\theta$.  Noting that the signal unitary $U_\theta$ acts trivially upon dephasing, our model is overall given by the channel
\begin{align}
\Lambda_{\theta,p}(\rho):= (1-p)U_\theta\rho U_\theta^\dagger
+p |0\rangle \langle 0| \rho |0\rangle \langle 0| 
+p |1\rangle \langle 1| \rho |1\rangle \langle 1|. 
\end{align}
%where $\theta \in (-\pi,\pi], p \in [0,1]$.

\begin{figure}
        \centering
\includegraphics[width=0.9\linewidth]{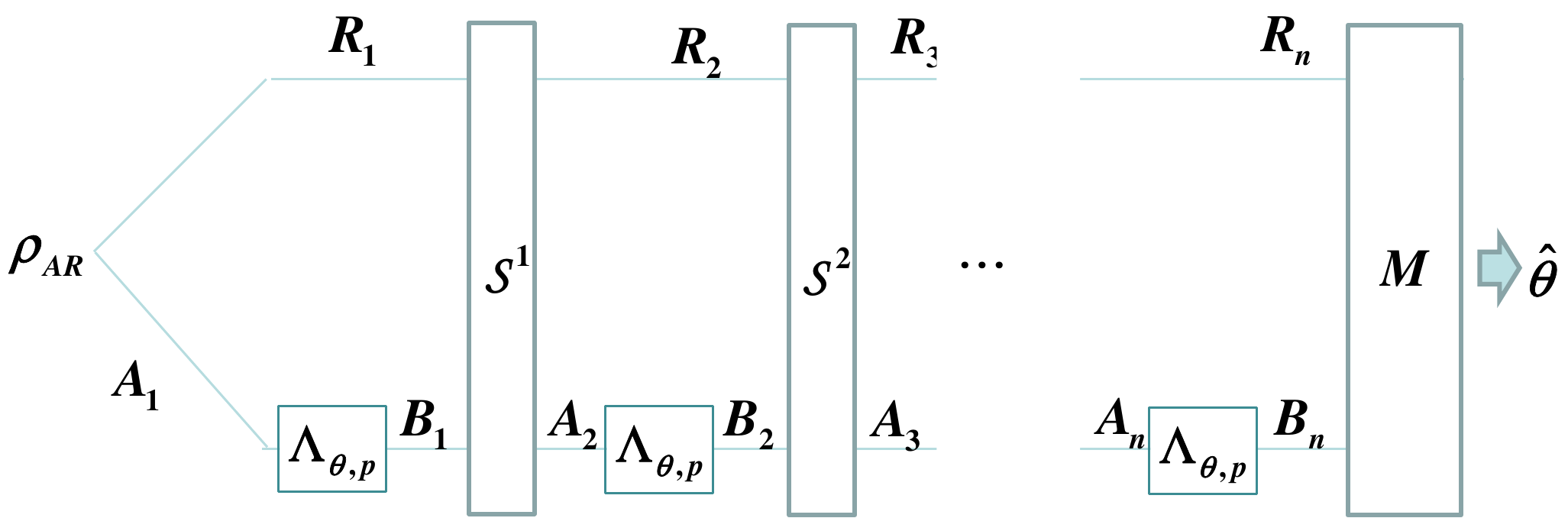}
\caption{Adaptive scheme:
$A_i$ is the input system of the $i$-th channel application.
$B_i$ is the output system of the $i$-th channel application.
$R_i$ is the memory system during the $i$-th channel application.
$\mathcal{S}^i$ is the channel that connects the $i$-th and $(i+1)$-th channel applications.
We apply $n$ uses of channel $\Lambda_{\theta,p}$ in an adaptive way, which represents the most general approach to channel parameter estimation. 
The $n$ uses of the channel $\Lambda_{\theta,p}$ 
are interleaved with $n-1$ quantum
channels $\mathcal{S}^1, \cdots, \mathcal{S}^{n-1}$, which can also share memory systems with each other. 
The final measurement $M$ outputs 
the outcome $\hat{\theta}$ 
as our estimate of the unknown parameter $\theta$. }
\Label{FF1}
\end{figure}

Here, we study the Heisenberg scaling of the channel estimation of
$\Lambda_{\theta,p}$ under the adaptive scheme, as illustrated in Fig.~\ref{FF1}, which represents the most general approach to channel estimation.
Since the $n^2$ scaling of the QFI is a necessary condition for global Heisenberg scaling 
in terms of both figures of merit, we can obtain a simple upper bound on the noise strength $p$ through analyzing QFI as follows. %Cram\'{e}r-Rao inequality for the mean square error and the limiting distribution
%implies that 
Note that the SLD QFI is upper bounded by the right logarithmic derivative (RLD) QFI, namely $J_{\theta}^{\sld} \leq J_{\theta}^{\rld}$ where 
$J_{\theta}^{\rld}=\tr(L_R^\dagger \rho_{\theta} L_R)$ is the RLD QFI and $L_R$ is the RLD operator satisfying
   $ \frac{\partial \rho_{\theta}}{\partial \theta}=\rho_{\theta}L_R$.   %  We shall be interested in various QFIs associated with our model channel $\Lambda_{\theta,p}$.  
   We denote the SLD (RLD) QFI of the output state of $\Lambda_{\theta,p}$ acting on input state $\rho$  as ${J}_{\theta,p, \rho}^{\sld(\rld)}$, and then the channel SLD (RLD) QFI of  $\Lambda_{\theta,p}$  given by maximizing over all input states as $\mathcal{J}_{\theta,p}$, namely $\mathcal{J}_{\theta,p}^{\sld(\rld)} := \max_\rho {J}_{\theta,p, \rho}^{\sld(\rld)}$.  
Although the maximum SLD QFI $\mathcal{J}_{\theta,p}^{\sld}$ is not additive which makes the analysis of it difficult in general,
the maximum RLD QFI $\mathcal{J}_{\theta,p}^{\rld}$ is additive even in the adaptive scheme \cite[Theorem 18]{Katariya}. 
So we need only address the RLD QFI to derive a necessary condition for Heisenberg scaling.
It can be verified that $\mathcal{J}_{\theta,p}^{\rld}=\frac{2(1-p)^2}{p(2-p)}$. Due to the additivity of RLD QFI,  
the maximum RLD QFI %$\mathcal{J}_{\theta,p}^{\rld,[n]}$ 
for $n$ uses
then equals $n\frac{2(1-p)^2}{p(2-p)}$. 
Therefore, the maximum RLD QFI scales as $n^2$ if and only if
$p\leq O(n^{-1})$.  See Appendix A for more detailed calculations and discussions.

Note that this $p\leq O(n^{-1})$ bound on the phase damping strength is quite strong, comparable to e.g.\ an erasure noise model in which only a constant number of qubits are erased in a scalable system of $n$ qubits.  %\red{[decoherence time correspondence?]} 
In fact, it is easy to check that $p$ would need to be sub-constant to achieve any scaling advantage over the shot-noise limit. %according to Eq.~(\ref{eq:rld}).  
This is consistent with (and improves) the previous knowledge \cite{ZhouJiang21} that Heisenberg scaling cannot be achieved for any constant $p>0$ in our noise model due to the ``Hamiltonian-not-in-Kraus-span'' condition.  An overall message is that the metrological advantage of quantum systems is highly fragile in noisy environments.

%\red{Will put some discussions on this $1/n$ scaling.}

\smallskip

{Now we consider whether the conditions for global Heisenberg scaling can actually be attained when $p\leq O(n^{-1})$ (see Appendix B for a detailed exposition).  
To set the stage , we first discuss the noiseless model where $n$ unitary channels act in parallel on a $n$-qubit input state, which is assumed to be a pure state
$|\eta\rangle := \sum_{m=0}^{n} a_m\ket{m}$, where 
$\ket{m}$ is a normalized vector in the eigenspace of 
$\frac{d}{d\theta}U_\theta^{\otimes n}|_{\theta=0}$ with eigenvalue 
$\frac{n}{2}-m$. 
We choose the coefficients $a_m$ as $a_m:= \frac{1}{\sqrt{n+1}}f(\frac{m}{n})$, here $f$ is some square-integrable $C^1$-differentiable function on $[0,1]$ with $l^2$ norm $1$, which is the key object in our analysis.
The distribution of the outcome of the phase covariant measurement
is then given by
\begin{align*}
&P_{\theta} \Big\{
\frac{a}{n} \le \hat\theta -\theta \le \frac{b}{n} \Big\}
=
P_{0} \Big\{
\frac{a}{n} \le \hat\theta  \le \frac{b}{n} \Big\}\\
=&
\int_{\frac{a}{n} }^{\frac{b}{n} }
\frac{1}{n+1}
%\sum_{\hat\theta: \frac{a}{n} \le \hat\theta \le \frac{b}{n}}
\Big|\sum_{m=0}^{n} 
e^{i \hat\theta m}f(\frac{m}{n}) \Big| ^2 \frac{d \hat\theta}{2\pi}
\cong  \int_{a}^{b}
|{\cal F}f(t)|^2 dt ,
\end{align*}
where ${\cal F}f$ {denotes} the Fourier transform of $f$.
That is, the limiting distribution of the estimate is determined by 
the Fourier transform ${\cal F}f$ \cite{IH09}, and 
global Heisenberg scaling in terms of limiting distribution can be achieved 
when the input state $|\eta\rangle$ is given by any square-integrable $C^1$-differentiable function $f$ on $[0,1]$ with $l^2$ norm equals to $1$.
As for the average error,
consider $R[|\eta\rangle]:= \mathbb{E}_{\hat\theta,\theta}[\sin^2 (\hat\theta -\theta)]$ for input state $|\eta\rangle$,
where the error function is taken to be $\sin^2 (\hat\theta -\theta)$.}
%\rednew{[This is not clear, say something like ``associated with error function $\sin$''. Also, use $R$ as in section 2? $C$ is already used for Choi matrix]}.
Suppose the Dirichlet boundary condition, i.e.~$f(0)=f(1)=0$, holds, e.g., 
$f(x)$ is given by $\sqrt{2} \sin (\pi x)$.
Then we have
\begin{align}
\tilde{R}[|\eta\rangle]=\frac{1}{4 n^2}\langle f |P^2|f \rangle 
+o\left(\frac{1}{n^2}\right),\label{NKA6}
\end{align}
where $P=-i \frac{d}{dx}$.
When the Dirichlet boundary condition does not hold, 
$\tilde{R}[|\eta\rangle]= O(n^{-1})$;
More specifically,
\begin{align}
\tilde{R}[|\eta\rangle] \cong 
\frac{1}{n}(A_+(f)+A_-(f))\Si(2 \pi),
\end{align}
where 
\begin{align}
A_+(f)&=\lim_{R_1 \to \infty}\lim_{R_2 \to \infty}\frac{1}{R_2}\int_{R_1}^{R_1+R_2}
t^2|{\cal F}f(t)|^2 dt \\
A_-(f)&=\lim_{R_1 \to -\infty}\lim_{R_2 \to -\infty}\frac{1}{R_2}\int_{R_1}^{R_1+R_2}
t^2|{\cal F}f(t)|^2 dt,
\end{align}
$\Si(x):= \int_0^x \frac{\sin t}{t}dt$, and 
$\Si(2 \pi) \cong 1.41815$.
%See appendix for details.
%\{Therefore, the Dirichlet boundary condition holds if and only if the Heisenberg scaling for the average error condition is achieved.\}
Therefore, we conclude that the average error condition for global Heisenberg scaling is achieved if and only if 
the Dirichlet boundary condition $f(0)=f(1)=0$ holds.

%Note that the conditions for the achievement of the Heisenberg scaling under the limiting distribution and the average error are not the same. 
For a concrete case, consider the input state with the form $|\eta_{\rm uni}\rangle:= \sum_{m=0}^n
\frac{1}{\sqrt{n+1}} |m\rangle$, where
$f$ takes constant value $1$ on $[0,1]$ and 
we have
$\tilde{R}[|\eta_{\rm uni}\rangle]
=  \frac{\Si(2\pi)}{2 \pi n}+O(\frac{1}{n^2})$. 
This state achieves the global Heisenberg scaling in terms of limiting distribution, but not the average error, demonstrating that these two conditions are not equivalent.

In the presence of $p=\frac{\epsilon}{n}$ noise, 
the above analyses for the limit distribution and the average error $R[|\eta\rangle]$
are changed as follows.
For given integers $k,\ell$,
we define the operator $T_{t,k,\ell}$ as
\begin{align}
&T_{t,k,\ell}\nonumber \\
:=& \sum_{u=\max(0,t-k+\ell)}^{\min(t,l)} 
{k-\ell \choose t-u} {\ell \choose u} Q^{2(t-u)+\ell}(I-Q)^{2u+k-\ell} ,
\end{align}
where $Q$ is the multiplication operator.
Then, the average error is calculated under the the Dirichlet boundary condition $f(0)=f(1)=0$ as
\begin{align}
\tilde{R}[|\eta \rangle]
 \cong 
\sum_{k=0}^\infty
e^{-\epsilon}\frac{\epsilon^k}{k !}
\sum_{t=0}^{k}
\sum_{\ell=0}^k {k \choose \ell} 
\frac{1}{4n^2}\langle f| \sqrt{T_{t,k,\ell}}P^2\sqrt{T_{t,k,\ell}}|f\rangle.
\end{align}
Since the Dirichlet boundary condition for $f$ implies 
the Dirichlet boundary condition for $\sqrt{T_{t,k,\ell}} f$,
the average error $\tilde{R}[|\eta \rangle]$ achieves the Heisenberg scaling even in the case with noise 
$p=\frac{\epsilon}{n}$.
As for the limiting distribution condition, we have  
\begin{align}
& 
%P_{\theta,\frac{\epsilon}{n},n}
P_{\theta}
\Big\{
\frac{a}{n} \le \hat\theta -\theta \le \frac{b}{n} \Big\}\nonumber \\
\cong &
\sum_{k=0}^\infty
e^{-\epsilon}\frac{\epsilon^k}{k !}
\sum_{t=0}^{k}
\sum_{\ell=0}^k {k \choose \ell} 
\int_{a}^{b}
|{\cal F}(\sqrt{T_{t,k,\ell}}f)(t)|^2 dt .\Label{ZXO2}
\end{align}
%Here, $P_{\theta,p,n}$ denotes the distribution when the true parameter,the dephasing probability, and the number of applications, are $\theta$, $p$, and $n$, respectively.
That is, we find that 
the Heisenberg scaling in terms of limiting distribution can be achieved
even when $f$ does not satisfy the Dirichlet boundary condition. The overall message is summarized as follows.
\begin{restatable}{theorem}{thmglobal}\Label{TH1}
%When the phase damping strength $p$ scales as $O(1/n)$, the global Heisenberg scaling in the senses of both average error and limiting distribution can be achieved. 
\red{The strength of phase damping noise $p\in O(1/n)$ is a necessary and sufficient condition for the existence of an estimator to achieve global Heisenberg scaling in terms of both average error and limiting distribution.}
\end{restatable}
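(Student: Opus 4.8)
\medskip
\noindent\emph{Proof proposal.}
I would prove the two implications separately, expecting most of the work to sit in the asymptotic estimates behind the displayed formulas rather than in the logical skeleton. For \emph{necessity}, suppose some estimator, in the general adaptive scheme with $n$ uses of $\Lambda_{\theta,p}$, achieves global Heisenberg scaling in either sense. Both figures of merit force the total SLD QFI to scale as $n^2$: were it $o(n^2)$, the fidelity between the $n$-use output states at $\theta$ and at $\theta+c/n$ would tend to $1$ for each fixed $c$, so every estimator would be asymptotically blind to such shifts, making the limiting distribution translation invariant on $\mathbb{R}$---impossible for a probability measure---and also precluding an $O(n^{-2})$ average error. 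I would then bound the total SLD QFI using $J^{\sld}\le J^{\rld}$ and the additivity of the maximum RLD QFI under adaptive strategies \cite[Theorem~18]{Katariya}, so that it is at most $n\,\mathcal{J}_{\theta,p}^{\rld}=n\cdot\frac{2(1-p)^2}{p(2-p)}$. Since this is $\Theta(n/p)$ as $p\to 0$ and only $\Theta(n)$ when $p$ stays bounded below, requiring $\Omega(n^2)$ forces $p=O(1/n)$; this also re-derives the known impossibility at any fixed $p>0$.

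For \emph{sufficiency}, assume $p=O(1/n)$. First I would reduce to $p=\epsilon/n$ with $\epsilon$ the implied constant: since $U_\theta$ is diagonal it commutes with $\mathcal{N}_p$ and the dephasing channels form a one-parameter semigroup, so $\Lambda_{\theta,p'}=\mathcal{N}_q\circ\Lambda_{\theta,p}$ whenever $p'\ge p$; appending the fixed local channel $\mathcal{N}_q$ after each use turns a protocol for $\Lambda_{\theta,p'}$ into one for $\Lambda_{\theta,p}$ with identical output statistics, so it suffices to treat $p=\epsilon/n$. I would then run the parallel protocol (a special adaptive strategy): input $|\eta\rangle=\sum_{m=0}^n a_m|m\rangle$ with $a_m=\frac{1}{\sqrt{n+1}}f(m/n)$ for a $C^1$, $l^2$-normalized $f$ on $[0,1]$ obeying the Dirichlet condition $f(0)=f(1)=0$ (e.g.\ $f(x)=\sqrt{2}\sin(\pi x)$), and perform the phase-covariant measurement. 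The displayed noisy asymptotics then settle both criteria simultaneously: the average-error identity gives $\tilde R[|\eta\rangle]=\sum_{k}e^{-\epsilon}\frac{\epsilon^k}{k!}\sum_{t=0}^k\sum_{\ell=0}^k\binom{k}{\ell}\frac{1}{4n^2}\langle f|\sqrt{T_{t,k,\ell}}P^2\sqrt{T_{t,k,\ell}}|f\rangle+o(n^{-2})$, and because each $T_{t,k,\ell}$ is multiplication by a polynomial nonnegative on $[0,1]$ with norm $\le 2^k$, the endpoint vanishing of $f$ passes to $\sqrt{T_{t,k,\ell}}f$, so each inner product is a finite $n$-independent number of at most exponential growth in $k$; the Poisson weighting then makes the series absolutely convergent, yielding $\tilde R[|\eta\rangle]=O(n^{-2})$. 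For the same $f$, \eqref{ZXO2} shows $P_\theta\{\tfrac{a}{n}\le\hat\theta-\theta\le\tfrac{b}{n}\}$ converges to the $n$-independent mixture $\sum_{k}e^{-\epsilon}\frac{\epsilon^k}{k!}\sum_{t,\ell}\binom{k}{\ell}\int_a^b|\mathcal{F}(\sqrt{T_{t,k,\ell}}f)(t)|^2\,dt$, which has total mass $1$ by the completeness relation $\sum_{t,\ell}\binom{k}{\ell}T_{t,k,\ell}=I$ and $\|f\|=1$, and is nondegenerate since $\mathcal{F}f$ is entire with isolated zeros. Hence this single protocol attains global Heisenberg scaling in both senses.

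The main obstacle is establishing those asymptotics. In the noiseless case one must show that, after rescaling $\hat\theta=t/n$, the Dirichlet-kernel-type sum $\frac{1}{n+1}\bigl|\sum_{m=0}^n e^{i\hat\theta m}f(m/n)\bigr|^2$ converges to $|\mathcal{F}f(t)|^2$ with remainders controlled uniformly enough to compute $\mathbb{E}[\sin^2(\hat\theta-\theta)]$ to order $n^{-2}$ under the Dirichlet condition (Eq.~\eqref{NKA6}) and to order $n^{-1}$ with the constant $\Si(2\pi)$ otherwise---the latter coming from the $1/t$ tail of $\mathcal{F}f$ created by the endpoint jumps. The harder and genuinely new step is transporting this through the noise: one conditions on $k$ of the $n$ uses being dephased (asymptotically Poisson$(\epsilon)$ since $np=\epsilon$) and on how those events split between the two which-path sectors---indexed by $\ell$ and the combinatorial parameter $t$---then identifies the effective input profile as $\sqrt{T_{t,k,\ell}}f$ and checks that the surviving coherent sector reproduces the noiseless Fourier picture with $f$ replaced by this profile. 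The bookkeeping facts ($\sum_{t,\ell}\binom{k}{\ell}T_{t,k,\ell}=I$, $\|T_{t,k,\ell}\|\le 2^k$, preservation of the Dirichlet condition) are routine once the Kraus decomposition is unpacked; the delicate part is the uniform control of every $o(\cdot)$ remainder across the infinite sum over $k$, so that the termwise limits may legitimately be summed.
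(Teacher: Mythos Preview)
Your proposal is correct and follows essentially the same route as the paper: necessity via the RLD Fisher information bound and its additivity under adaptive strategies, and sufficiency via the parallel covariant protocol with a Dirichlet profile $f$, analyzed through the Poisson limit for the number of dephasing events together with the $T_{t,k,\ell}$ asymptotics (the paper's Lemma~S1). Your semigroup reduction from general $p\in O(1/n)$ to the case $p=\epsilon/n$, and your explicit argument that $o(n^2)$ total QFI forces a translation-invariant limiting law and an $\omega(n^{-2})$ average error, are useful additions that the paper leaves implicit.
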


\section{A practical method using single-qubit memory}

{In the above, we demonstrated the attainability of the global Heisenberg scaling with $n$ channels acting in parallel on a $n$-qubit state. However, the protocol is practically demanding  since the state is in general highly entangled and the measurement typically needs to be collective. In the following we propose and analyze a simple adaptive one-qubit protocol that builds on the phase estimation algorithm in \cite{Cleve} (see Appendix C for details).}

{\begin{protocol}\Label{P1}
%\red{[[Keep only $k$?]]}
In the first step, 
we prepare the input state $|+\rangle:=
\frac{1}{\sqrt{2}}(|0\rangle+|1\rangle)$,
and apply the unknown channel $\Lambda_{\theta,p}$ for $2^N$ times.
Then, we measure the final state in the basis
$\{|+\rangle, |-\rangle \}$ and set
$A_1=0, 1$ upon getting $|+\rangle, |-\rangle$ respectively.

%In the second step, we prepare the input state $|+\rangle:=\frac{1}{\sqrt{2}}(|0\rangle+|1\rangle)$, and apply $\Lambda_{\theta,p}$ for $2^{n-1}$ times. Then, we apply $U_{-A_1 \pi/2 }$ depending on $A_1$. Then, we measure the final state in the basis $\{|+\rangle, |-\rangle \}$ and set $A_2=0, 1$ upon getting $|+\rangle, |-\rangle$ respectively.

Inductively, in the $k$-th step, 
we prepare the input state $|+\rangle:=
\frac{1}{\sqrt{2}}(|0\rangle+|1\rangle)$,
and apply  $\Lambda_{\theta,p}$ for $2^{N-k+1}$ times.
Then, we apply $U_{-A_1  2^{-k+1} \pi-A_2 2^{-k+2}\pi - \cdots -A_{k-1} 2^{-1}\pi  }$ depending on $A_1, \cdots,A_{k-1}$.
Then, we measure the final state in the basis
$\{|+\rangle, |-\rangle \}$ and set
$A_k=0, 1$ upon getting $|+\rangle, |-\rangle$ respectively.
%We obtain the outcome  $A_k=0$ with $|+\rangle$ and $A_k=1$ with $|-\rangle$.

We repeat the above up to the $(N+1)$-th step.
%Then, as the final step, we make the $n+1$-th step as follows.
After the final step, depending on $A^{N+1}:=(A_1, \cdots, A_{N+1})$, we obtain the final estimate
$\hat{\theta}(A^{N+1}):= A_1  2^{-N} \pi-A_2 2^{-N+1}\pi + \cdots + A_k 2^{k-(N+1)}\pi
+ \cdots +A_{N}2^{-1} \pi +A_{N+1} \pi$. % and $x^{n+1}=(x_1, \cdots, x_{n+1})$, and

This protocol uses $n:=2^{N+1}-1 $ applications of the unknown channel $\Lambda_{\theta,p}$ in total.
\end{protocol}}
   
%\red{[Picture for the protocol?]}

For the noiseless case, the stochastic behavior of the estimate $\hat\theta$
turns out to be the same as the $\eta=\eta_{\rm uni}$ case above .
%That is, the protocol attains global Heisenberg scaling in terms of limiting distribution but not the average error even for the noiseless case.
The noisy case requires a different analysis.
Again, consider $p=\frac{\epsilon}{n}$ noise. 
Then, the stochastic behavior of the error $\hat{\theta}-\theta$
is asymptotically characterized as
\begin{align}
& \lim_{N \to \infty}\mathbb{E}_\theta [P_{\theta,\frac{\epsilon}{n},n} 
\Big\{ \frac{a}{n}\le \hat{\theta}-\theta \le \frac{b}{n}\Big\}] \nonumber \\
=&\lim_{N \to \infty}
\mathbb{E}_{\hat{A}^{N+1},X^{N+1}}
\bigg[\int_{b}^{a}
 \frac{\sin^2 y }{(y+\zeta(\hat{A}^{N+1},X^{N+1}))^2}\frac{dy}{2\pi}\bigg].
\label{EY1}
\end{align}
Here, 
the term $\zeta(\hat{A}^{N+1},X^{N+1}):= 
((-1)^{\hat{A}_1} 2 X_1   + \cdots +(-1)^{\hat{A}_{N+1}}2^{N+1} X_{N+1})\pi $ 
represents the difference from the noiseless case.
In addition,
the binary variables 
$\hat{A}^{N+1}:=(\hat{A}_1,  \cdots, \hat{A}_{N+1})$ are independent binary variables subject to the uniform distribution
and 
the binary variables
$X^{N+1}:=(X_1,  \cdots, X_{N+1})$ 
are independently subject to the following distribution:
\begin{align}
P_{X_k}(1)=\frac{1}{2}\big(1-e^{-\epsilon 2^{-k}}\big),\quad
P_{X_k}(0)=\frac{1}{2}\big(1+e^{-\epsilon 2^{-k}}\big).
\Label{AJ1}
\end{align}
Therefore, 
the proposed estimator
achieves Heisenberg scaling in terms of limiting distribution.
For intuitions, we show in Fig.~\ref{FF2} a numerical comparison between the PDFs for the limiting distributions of the noiseless and noisy cases.
Also, the asymptotics of the average error is explicitly calculated to be
\begin{align}
\mathbb{E}[(\hat{\theta}-\theta)^2] \cong \frac{1}{n} \big( \epsilon \pi^2
+\frac{\Si(2\pi)}{ \pi }\big).
\end{align}

%where $\Si(x):= \int_0^x \frac{\sin t}{t}dt$, and $\Si(2 \pi) \cong 1.41815$.

\begin{figure}
        \centering
 \includegraphics[width=0.7\linewidth]{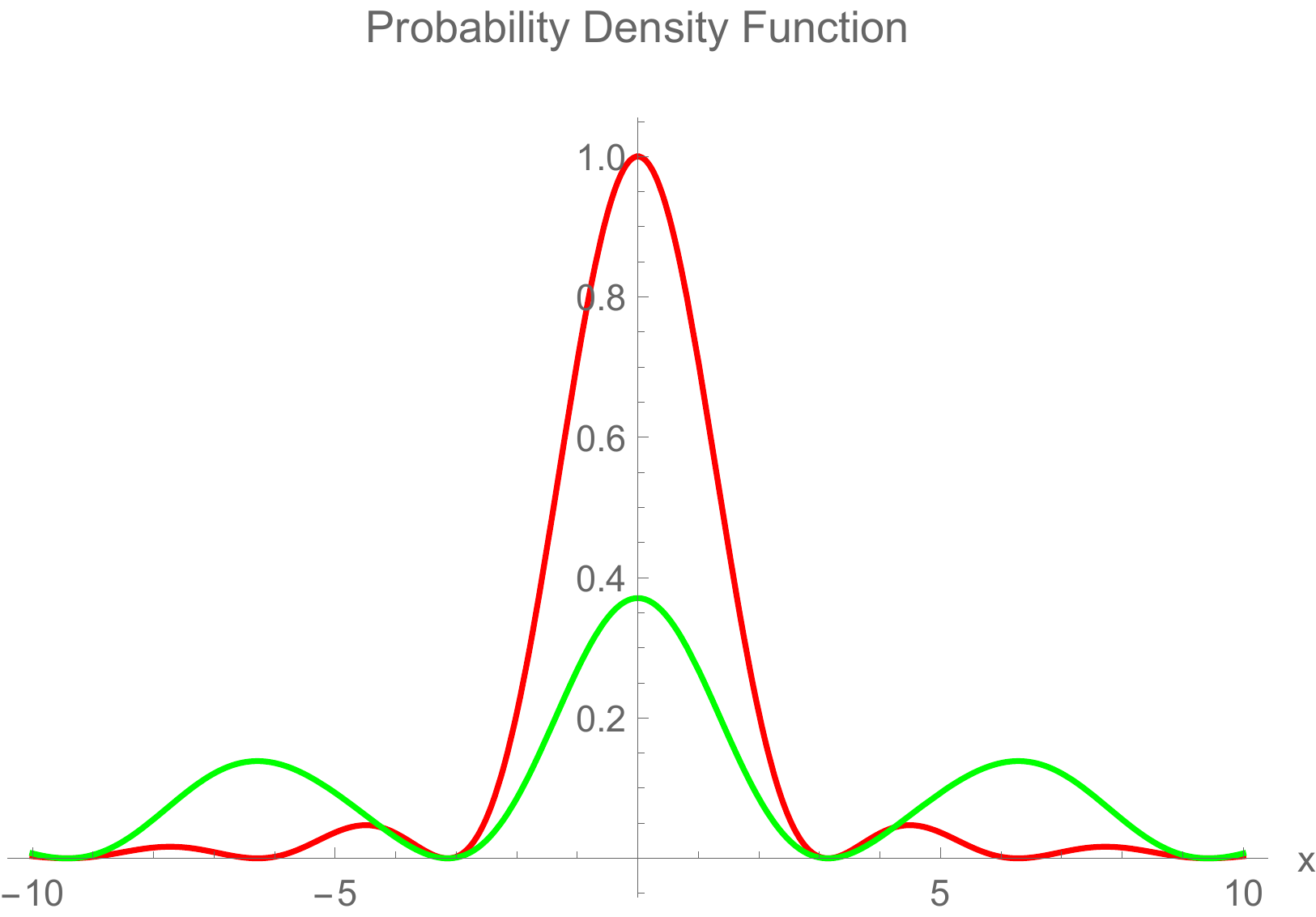}
 \caption{A numerical comparison between the limiting distributions of the noiseless and noisy cases.
The red curve is the probability density function $\frac{\sin^2 x}{x^2}$
 of the noiseless case. 
The green curve is the probability density function
 of the noisy case with $\epsilon=1$. 
}
\Label{FF2}
\end{figure}

\section{Summary and outlook}
In this work, we considered the problem of achieving Heisenberg scaling in phase estimation in a global sense, for which the previously widely studied Cram\'{e}r-Rao approach and quantum Fisher information is not sufficient.  We introduced two types of meaningful conditions for such global Heisenberg scaling, respectively based on the average error and the limiting distribution.  
In particular, we consider the limiting distribution to be a practically important and powerful tool, and we hope our work stimulates more interest in this perspective.  Here we are particularly interested in the attainability of global Heisenberg scaling in practical scenarios where noise effects are present.  We considered phase damping noise and proved a necessary and sufficient condition on the noise strength which can be regarded a strong ``threshold theorem'' for global Heisenberg scaling -- when the noise strength is above the $O(n^{-1})$ scaling we showed by analyzing QFI that the Heisenberg scaling cannot be achieved;  while otherwise, we gave a protocol that achieves both notions of global Heisenberg scaling.  Furthermore, we generalized the well known phase estimation algorithm in \cite{Cleve} to construct a practically implementable protocol that uses only a small memory, which achieves global Heisenberg scaling in terms of limiting distribution.   %In summary, our work indicates a strong noise threshold for 
%\red{[[I will Edit.]]}
%Note that phase estimation is an essential subroutine in wide-ranging important quantum algorithms (e.g.~factoring \cite{}, linear system solving \cite{}).  

For future work, it would be interesting to extend the analysis to more general noise models such as depolarizing and erasure noises, and consider the estimation of more general actions like SU($d$).  Furthermore, note that phase estimation is an essential subroutine in a wide range of important quantum algorithms (e.g.~factoring \cite{Shor}, linear system solving \cite{HHL}).  
In praticular, the limiting distribution method enables us to consider success probability, which is important in practice.   An important future work is to explore the connections and applications of our results to the practical implementation of such algorithms.

\section*{Acknowledgements}
MH is supported in part by Guangdong Provincial Key
Laboratory (grant no. 2019B121203002).
ZWL is supported by Perimeter Institute for Theoretical Physics.
Research at Perimeter Institute is supported in part by the Government of Canada through the Department of Innovation, Science and Economic Development Canada and by the Province of Ontario through the Ministry of Colleges and Universities.  Part of this work was done during ZWL's visit to SUSTech quantum institute and ZWL would like to thank the institute for hospitality.

\bibliography{Noisy-estimation}

%merlin.mbs apsrev4-1.bst 2010-07-25 4.21a (PWD, AO, DPC) hacked
%Control: key (0)
%Control: author (0) dotless jnrlst
%Control: editor formatted (1) identically to author
%Control: production of article title (0) allowed
%Control: page (1) range
%Control: year (0) verbatim
%Control: production of eprint (0) enabled
\begin{thebibliography}{38}%
\makeatletter
\providecommand \@ifxundefined [1]{%
 \@ifx{#1\undefined}
}%
\providecommand \@ifnum [1]{%
 \ifnum #1\expandafter \@firstoftwo
 \else \expandafter \@secondoftwo
 \fi
}%
\providecommand \@ifx [1]{%
 \ifx #1\expandafter \@firstoftwo
 \else \expandafter \@secondoftwo
 \fi
}%
\providecommand \natexlab [1]{#1}%
\providecommand \enquote  [1]{``#1''}%
\providecommand \bibnamefont  [1]{#1}%
\providecommand \bibfnamefont [1]{#1}%
\providecommand \citenamefont [1]{#1}%
\providecommand \href@noop [0]{\@secondoftwo}%
\providecommand \href [0]{\begingroup \@sanitize@url \@href}%
\providecommand \@href[1]{\@@startlink{#1}\@@href}%
\providecommand \@@href[1]{\endgroup#1\@@endlink}%
\providecommand \@sanitize@url [0]{\catcode `\\12\catcode `\$12\catcode
  `\&12\catcode `\#12\catcode `\^12\catcode `\_12\catcode `\%12\relax}%
\providecommand \@@startlink[1]{}%
\providecommand \@@endlink[0]{}%
\providecommand \url  [0]{\begingroup\@sanitize@url \@url }%
\providecommand \@url [1]{\endgroup\@href {#1}{\urlprefix }}%
\providecommand \urlprefix  [0]{URL }%
\providecommand \Eprint [0]{\href }%
\providecommand \doibase [0]{http://dx.doi.org/}%
\providecommand \selectlanguage [0]{\@gobble}%
\providecommand \bibinfo  [0]{\@secondoftwo}%
\providecommand \bibfield  [0]{\@secondoftwo}%
\providecommand \translation [1]{[#1]}%
\providecommand \BibitemOpen [0]{}%
\providecommand \bibitemStop [0]{}%
\providecommand \bibitemNoStop [0]{.\EOS\space}%
\providecommand \EOS [0]{\spacefactor3000\relax}%
\providecommand \BibitemShut  [1]{\csname bibitem#1\endcsname}%
\let\auto@bib@innerbib\@empty
%</preamble>
\bibitem [{\citenamefont {Giovannetti}\ \emph {et~al.}(2006)\citenamefont
  {Giovannetti}, \citenamefont {Lloyd},\ and\ \citenamefont
  {Maccone}}]{PhysRevLett.96.010401}%
  \BibitemOpen
  \bibfield  {author} {\bibinfo {author} {\bibfnamefont {Vittorio}\
  \bibnamefont {Giovannetti}}, \bibinfo {author} {\bibfnamefont {Seth}\
  \bibnamefont {Lloyd}}, \ and\ \bibinfo {author} {\bibfnamefont {Lorenzo}\
  \bibnamefont {Maccone}},\ }\bibfield  {title} {\enquote {\bibinfo {title}
  {Quantum metrology},}\ }\href {\doibase 10.1103/PhysRevLett.96.010401}
  {\bibfield  {journal} {\bibinfo  {journal} {Phys. Rev. Lett.}\ }\textbf
  {\bibinfo {volume} {96}},\ \bibinfo {pages} {010401} (\bibinfo {year}
  {2006})}\BibitemShut {NoStop}%
\bibitem [{\citenamefont {Giovannetti}\ \emph {et~al.}(2011)\citenamefont
  {Giovannetti}, \citenamefont {Lloyd},\ and\ \citenamefont
  {Maccone}}]{Giovannetti2011}%
  \BibitemOpen
  \bibfield  {author} {\bibinfo {author} {\bibfnamefont {Vittorio}\
  \bibnamefont {Giovannetti}}, \bibinfo {author} {\bibfnamefont {Seth}\
  \bibnamefont {Lloyd}}, \ and\ \bibinfo {author} {\bibfnamefont {Lorenzo}\
  \bibnamefont {Maccone}},\ }\bibfield  {title} {\enquote {\bibinfo {title}
  {{Advances in quantum metrology}},}\ }\href {\doibase
  10.1038/nphoton.2011.35} {\bibfield  {journal} {\bibinfo  {journal} {Nature
  Photonics}\ }\textbf {\bibinfo {volume} {5}},\ \bibinfo {pages} {222--229}
  (\bibinfo {year} {2011})}\BibitemShut {NoStop}%
\bibitem [{\citenamefont {Degen}\ \emph {et~al.}(2017)\citenamefont {Degen},
  \citenamefont {Reinhard},\ and\ \citenamefont
  {Cappellaro}}]{RevModPhys.89.035002}%
  \BibitemOpen
  \bibfield  {author} {\bibinfo {author} {\bibfnamefont {C.~L.}\ \bibnamefont
  {Degen}}, \bibinfo {author} {\bibfnamefont {F.}~\bibnamefont {Reinhard}}, \
  and\ \bibinfo {author} {\bibfnamefont {P.}~\bibnamefont {Cappellaro}},\
  }\bibfield  {title} {\enquote {\bibinfo {title} {Quantum sensing},}\ }\href
  {\doibase 10.1103/RevModPhys.89.035002} {\bibfield  {journal} {\bibinfo
  {journal} {Rev. Mod. Phys.}\ }\textbf {\bibinfo {volume} {89}},\ \bibinfo
  {pages} {035002} (\bibinfo {year} {2017})}\BibitemShut {NoStop}%
\bibitem [{\citenamefont {Caves}(1981)}]{PhysRevD.23.1693}%
  \BibitemOpen
  \bibfield  {author} {\bibinfo {author} {\bibfnamefont {Carlton~M.}\
  \bibnamefont {Caves}},\ }\bibfield  {title} {\enquote {\bibinfo {title}
  {Quantum-mechanical noise in an interferometer},}\ }\href {\doibase
  10.1103/PhysRevD.23.1693} {\bibfield  {journal} {\bibinfo  {journal} {Phys.
  Rev. D}\ }\textbf {\bibinfo {volume} {23}},\ \bibinfo {pages} {1693--1708}
  (\bibinfo {year} {1981})}\BibitemShut {NoStop}%
\bibitem [{\citenamefont {{Kitaev}}(1995)}]{Kitaev}%
  \BibitemOpen
  \bibfield  {author} {\bibinfo {author} {\bibfnamefont {A.~Yu.}\ \bibnamefont
  {{Kitaev}}},\ }\bibfield  {title} {\enquote {\bibinfo {title} {{Quantum
  measurements and the Abelian Stabilizer Problem}},}\ }\href@noop {}
  {\bibfield  {journal} {\bibinfo  {journal} {arXiv e-prints}\ ,\ \bibinfo
  {eid} {quant-ph/9511026}} (\bibinfo {year} {1995})},\ \Eprint
  {http://arxiv.org/abs/quant-ph/9511026} {arXiv:quant-ph/9511026 [quant-ph]}
  \BibitemShut {NoStop}%
\bibitem [{\citenamefont {Nielsen}\ and\ \citenamefont {Chuang}(2010)}]{NC}%
  \BibitemOpen
  \bibfield  {author} {\bibinfo {author} {\bibfnamefont {Michael~A}\
  \bibnamefont {Nielsen}}\ and\ \bibinfo {author} {\bibfnamefont {Isaac~L}\
  \bibnamefont {Chuang}},\ }\href@noop {} {\emph {\bibinfo {title} {Quantum
  Computation and Quantum Information}}}\ (\bibinfo  {publisher} {Cambridge
  University Press},\ \bibinfo {year} {2010})\BibitemShut {NoStop}%
\bibitem [{\citenamefont {Bollinger}\ \emph {et~al.}(1996)\citenamefont
  {Bollinger}, \citenamefont {Itano}, \citenamefont {Wineland},\ and\
  \citenamefont {Heinzen}}]{PhysRevA.54.R4649}%
  \BibitemOpen
  \bibfield  {author} {\bibinfo {author} {\bibfnamefont {J.~J.}\ \bibnamefont
  {Bollinger}}, \bibinfo {author} {\bibfnamefont {Wayne~M.}\ \bibnamefont
  {Itano}}, \bibinfo {author} {\bibfnamefont {D.~J.}\ \bibnamefont {Wineland}},
  \ and\ \bibinfo {author} {\bibfnamefont {D.~J.}\ \bibnamefont {Heinzen}},\
  }\bibfield  {title} {\enquote {\bibinfo {title} {Optimal frequency
  measurements with maximally correlated states},}\ }\href {\doibase
  10.1103/PhysRevA.54.R4649} {\bibfield  {journal} {\bibinfo  {journal} {Phys.
  Rev. A}\ }\textbf {\bibinfo {volume} {54}},\ \bibinfo {pages} {R4649--R4652}
  (\bibinfo {year} {1996})}\BibitemShut {NoStop}%
\bibitem [{\citenamefont {McKenzie}\ \emph {et~al.}(2002)\citenamefont
  {McKenzie}, \citenamefont {Shaddock}, \citenamefont {McClelland},
  \citenamefont {Buchler},\ and\ \citenamefont {Lam}}]{PhysRevLett.88.231102}%
  \BibitemOpen
  \bibfield  {author} {\bibinfo {author} {\bibfnamefont {Kirk}\ \bibnamefont
  {McKenzie}}, \bibinfo {author} {\bibfnamefont {Daniel~A.}\ \bibnamefont
  {Shaddock}}, \bibinfo {author} {\bibfnamefont {David~E.}\ \bibnamefont
  {McClelland}}, \bibinfo {author} {\bibfnamefont {Ben~C.}\ \bibnamefont
  {Buchler}}, \ and\ \bibinfo {author} {\bibfnamefont {Ping~Koy}\ \bibnamefont
  {Lam}},\ }\bibfield  {title} {\enquote {\bibinfo {title} {Experimental
  demonstration of a squeezing-enhanced power-recycled michelson interferometer
  for gravitational wave detection},}\ }\href {\doibase
  10.1103/PhysRevLett.88.231102} {\bibfield  {journal} {\bibinfo  {journal}
  {Phys. Rev. Lett.}\ }\textbf {\bibinfo {volume} {88}},\ \bibinfo {pages}
  {231102} (\bibinfo {year} {2002})}\BibitemShut {NoStop}%
\bibitem [{\citenamefont {Ludlow}\ \emph {et~al.}(2015)\citenamefont {Ludlow},
  \citenamefont {Boyd}, \citenamefont {Ye}, \citenamefont {Peik},\ and\
  \citenamefont {Schmidt}}]{RevModPhys.87.637}%
  \BibitemOpen
  \bibfield  {author} {\bibinfo {author} {\bibfnamefont {Andrew~D.}\
  \bibnamefont {Ludlow}}, \bibinfo {author} {\bibfnamefont {Martin~M.}\
  \bibnamefont {Boyd}}, \bibinfo {author} {\bibfnamefont {Jun}\ \bibnamefont
  {Ye}}, \bibinfo {author} {\bibfnamefont {E.}~\bibnamefont {Peik}}, \ and\
  \bibinfo {author} {\bibfnamefont {P.~O.}\ \bibnamefont {Schmidt}},\
  }\bibfield  {title} {\enquote {\bibinfo {title} {Optical atomic clocks},}\
  }\href {\doibase 10.1103/RevModPhys.87.637} {\bibfield  {journal} {\bibinfo
  {journal} {Rev. Mod. Phys.}\ }\textbf {\bibinfo {volume} {87}},\ \bibinfo
  {pages} {637--701} (\bibinfo {year} {2015})}\BibitemShut {NoStop}%
\bibitem [{\citenamefont {Giovannetti}\ \emph {et~al.}(2004)\citenamefont
  {Giovannetti}, \citenamefont {Lloyd},\ and\ \citenamefont
  {Maccone}}]{Giovannetti1330}%
  \BibitemOpen
  \bibfield  {author} {\bibinfo {author} {\bibfnamefont {Vittorio}\
  \bibnamefont {Giovannetti}}, \bibinfo {author} {\bibfnamefont {Seth}\
  \bibnamefont {Lloyd}}, \ and\ \bibinfo {author} {\bibfnamefont {Lorenzo}\
  \bibnamefont {Maccone}},\ }\bibfield  {title} {\enquote {\bibinfo {title}
  {Quantum-enhanced measurements: Beating the standard quantum limit},}\ }\href
  {\doibase 10.1126/science.1104149} {\bibfield  {journal} {\bibinfo  {journal}
  {Science}\ }\textbf {\bibinfo {volume} {306}},\ \bibinfo {pages} {1330--1336}
  (\bibinfo {year} {2004})}\BibitemShut {NoStop}%
\bibitem [{\citenamefont {Preskill}(2018)}]{Preskill2018quantumcomputingin}%
  \BibitemOpen
  \bibfield  {author} {\bibinfo {author} {\bibfnamefont {John}\ \bibnamefont
  {Preskill}},\ }\bibfield  {title} {\enquote {\bibinfo {title} {Quantum
  {C}omputing in the {NISQ} era and beyond},}\ }\href {\doibase
  10.22331/q-2018-08-06-79} {\bibfield  {journal} {\bibinfo  {journal}
  {{Quantum}}\ }\textbf {\bibinfo {volume} {2}},\ \bibinfo {pages} {79}
  (\bibinfo {year} {2018})}\BibitemShut {NoStop}%
\bibitem [{\citenamefont {Higgins}\ \emph {et~al.}(2007)\citenamefont
  {Higgins}, \citenamefont {Berry}, \citenamefont {Bartlett}, \citenamefont
  {Wiseman},\ and\ \citenamefont {Pryde}}]{Higgins2007}%
  \BibitemOpen
  \bibfield  {author} {\bibinfo {author} {\bibfnamefont {B~L}\ \bibnamefont
  {Higgins}}, \bibinfo {author} {\bibfnamefont {D~W}\ \bibnamefont {Berry}},
  \bibinfo {author} {\bibfnamefont {S~D}\ \bibnamefont {Bartlett}}, \bibinfo
  {author} {\bibfnamefont {H~M}\ \bibnamefont {Wiseman}}, \ and\ \bibinfo
  {author} {\bibfnamefont {G~J}\ \bibnamefont {Pryde}},\ }\bibfield  {title}
  {\enquote {\bibinfo {title} {{Entanglement-free Heisenberg-limited phase
  estimation}},}\ }\href {\doibase 10.1038/nature06257} {\bibfield  {journal}
  {\bibinfo  {journal} {Nature}\ }\textbf {\bibinfo {volume} {450}},\ \bibinfo
  {pages} {393--396} (\bibinfo {year} {2007})}\BibitemShut {NoStop}%
\bibitem [{\citenamefont {Imai}\ and\ \citenamefont {Hayashi}(2009)}]{IH09}%
  \BibitemOpen
  \bibfield  {author} {\bibinfo {author} {\bibfnamefont {Hiroshi}\ \bibnamefont
  {Imai}}\ and\ \bibinfo {author} {\bibfnamefont {Masahito}\ \bibnamefont
  {Hayashi}},\ }\bibfield  {title} {\enquote {\bibinfo {title} {Fourier
  analytic approach to phase estimation in quantum systems},}\ }\href {\doibase
  10.1088/1367-2630/11/4/043034} {\bibfield  {journal} {\bibinfo  {journal}
  {New Journal of Physics}\ }\textbf {\bibinfo {volume} {11}},\ \bibinfo
  {pages} {043034} (\bibinfo {year} {2009})}\BibitemShut {NoStop}%
\bibitem [{\citenamefont {Yang}\ \emph {et~al.}(2019)\citenamefont {Yang},
  \citenamefont {Chiribella},\ and\ \citenamefont {Hayashi}}]{YCH}%
  \BibitemOpen
  \bibfield  {author} {\bibinfo {author} {\bibfnamefont {Yuxiang}\ \bibnamefont
  {Yang}}, \bibinfo {author} {\bibfnamefont {Giulio}\ \bibnamefont
  {Chiribella}}, \ and\ \bibinfo {author} {\bibfnamefont {Masahito}\
  \bibnamefont {Hayashi}},\ }\bibfield  {title} {\enquote {\bibinfo {title}
  {{Attaining the Ultimate Precision Limit in Quantum State Estimation}},}\
  }\href {\doibase 10.1007/s00220-019-03433-4} {\bibfield  {journal} {\bibinfo
  {journal} {Communications in Mathematical Physics}\ }\textbf {\bibinfo
  {volume} {368}},\ \bibinfo {pages} {223--293} (\bibinfo {year}
  {2019})}\BibitemShut {NoStop}%
\bibitem [{\citenamefont {Cleve}\ \emph {et~al.}(1998)\citenamefont {Cleve},
  \citenamefont {Ekert}, \citenamefont {Macchiavello},\ and\ \citenamefont
  {Mosca}}]{Cleve}%
  \BibitemOpen
  \bibfield  {author} {\bibinfo {author} {\bibfnamefont {R.}~\bibnamefont
  {Cleve}}, \bibinfo {author} {\bibfnamefont {A.}~\bibnamefont {Ekert}},
  \bibinfo {author} {\bibfnamefont {C.}~\bibnamefont {Macchiavello}}, \ and\
  \bibinfo {author} {\bibfnamefont {M.}~\bibnamefont {Mosca}},\ }\bibfield
  {title} {\enquote {\bibinfo {title} {Quantum algorithms revisited},}\ }\href
  {\doibase 10.1098/rspa.1998.0164} {\bibfield  {journal} {\bibinfo  {journal}
  {Proceedings of the Royal Society of London. Series A: Mathematical, Physical
  and Engineering Sciences}\ }\textbf {\bibinfo {volume} {454}},\ \bibinfo
  {pages} {339--354} (\bibinfo {year} {1998})}\BibitemShut {NoStop}%
\bibitem [{\citenamefont {Zhou}\ and\ \citenamefont
  {Jiang}(2021)}]{ZhouJiang21}%
  \BibitemOpen
  \bibfield  {author} {\bibinfo {author} {\bibfnamefont {Sisi}\ \bibnamefont
  {Zhou}}\ and\ \bibinfo {author} {\bibfnamefont {Liang}\ \bibnamefont
  {Jiang}},\ }\bibfield  {title} {\enquote {\bibinfo {title} {Asymptotic theory
  of quantum channel estimation},}\ }\href {\doibase
  10.1103/PRXQuantum.2.010343} {\bibfield  {journal} {\bibinfo  {journal} {PRX
  Quantum}\ }\textbf {\bibinfo {volume} {2}},\ \bibinfo {pages} {010343}
  (\bibinfo {year} {2021})}\BibitemShut {NoStop}%
\bibitem [{\citenamefont {Arrad}\ \emph {et~al.}(2014)\citenamefont {Arrad},
  \citenamefont {Vinkler}, \citenamefont {Aharonov},\ and\ \citenamefont
  {Retzker}}]{PhysRevLett.112.150801}%
  \BibitemOpen
  \bibfield  {author} {\bibinfo {author} {\bibfnamefont {G.}~\bibnamefont
  {Arrad}}, \bibinfo {author} {\bibfnamefont {Y.}~\bibnamefont {Vinkler}},
  \bibinfo {author} {\bibfnamefont {D.}~\bibnamefont {Aharonov}}, \ and\
  \bibinfo {author} {\bibfnamefont {A.}~\bibnamefont {Retzker}},\ }\bibfield
  {title} {\enquote {\bibinfo {title} {Increasing sensing resolution with error
  correction},}\ }\href {\doibase 10.1103/PhysRevLett.112.150801} {\bibfield
  {journal} {\bibinfo  {journal} {Phys. Rev. Lett.}\ }\textbf {\bibinfo
  {volume} {112}},\ \bibinfo {pages} {150801} (\bibinfo {year}
  {2014})}\BibitemShut {NoStop}%
\bibitem [{\citenamefont {Kessler}\ \emph {et~al.}(2014)\citenamefont
  {Kessler}, \citenamefont {Lovchinsky}, \citenamefont {Sushkov},\ and\
  \citenamefont {Lukin}}]{PhysRevLett.112.150802}%
  \BibitemOpen
  \bibfield  {author} {\bibinfo {author} {\bibfnamefont {E.~M.}\ \bibnamefont
  {Kessler}}, \bibinfo {author} {\bibfnamefont {I.}~\bibnamefont {Lovchinsky}},
  \bibinfo {author} {\bibfnamefont {A.~O.}\ \bibnamefont {Sushkov}}, \ and\
  \bibinfo {author} {\bibfnamefont {M.~D.}\ \bibnamefont {Lukin}},\ }\bibfield
  {title} {\enquote {\bibinfo {title} {Quantum error correction for
  metrology},}\ }\href {\doibase 10.1103/PhysRevLett.112.150802} {\bibfield
  {journal} {\bibinfo  {journal} {Phys. Rev. Lett.}\ }\textbf {\bibinfo
  {volume} {112}},\ \bibinfo {pages} {150802} (\bibinfo {year}
  {2014})}\BibitemShut {NoStop}%
\bibitem [{\citenamefont {D\"ur}\ \emph {et~al.}(2014)\citenamefont {D\"ur},
  \citenamefont {Skotiniotis}, \citenamefont {Fr\"owis},\ and\ \citenamefont
  {Kraus}}]{PhysRevLett.112.080801}%
  \BibitemOpen
  \bibfield  {author} {\bibinfo {author} {\bibfnamefont {W.}~\bibnamefont
  {D\"ur}}, \bibinfo {author} {\bibfnamefont {M.}~\bibnamefont {Skotiniotis}},
  \bibinfo {author} {\bibfnamefont {F.}~\bibnamefont {Fr\"owis}}, \ and\
  \bibinfo {author} {\bibfnamefont {B.}~\bibnamefont {Kraus}},\ }\bibfield
  {title} {\enquote {\bibinfo {title} {Improved quantum metrology using quantum
  error correction},}\ }\href {\doibase 10.1103/PhysRevLett.112.080801}
  {\bibfield  {journal} {\bibinfo  {journal} {Phys. Rev. Lett.}\ }\textbf
  {\bibinfo {volume} {112}},\ \bibinfo {pages} {080801} (\bibinfo {year}
  {2014})}\BibitemShut {NoStop}%
\bibitem [{\citenamefont {{Ozeri}}(2013)}]{2013arXiv1310.3432O}%
  \BibitemOpen
  \bibfield  {author} {\bibinfo {author} {\bibfnamefont {Roee}\ \bibnamefont
  {{Ozeri}}},\ }\bibfield  {title} {\enquote {\bibinfo {title} {{Heisenberg
  limited metrology using Quantum Error-Correction Codes}},}\ }\href@noop {}
  {\bibfield  {journal} {\bibinfo  {journal} {arXiv e-prints}\ ,\ \bibinfo
  {eid} {arXiv:1310.3432}} (\bibinfo {year} {2013})},\ \Eprint
  {http://arxiv.org/abs/1310.3432} {arXiv:1310.3432 [quant-ph]} \BibitemShut
  {NoStop}%
\bibitem [{\citenamefont {Demkowicz-Dobrza\ifmmode~\acute{n}\else
  \'{n}\fi{}ski}\ \emph {et~al.}(2017)\citenamefont
  {Demkowicz-Dobrza\ifmmode~\acute{n}\else \'{n}\fi{}ski}, \citenamefont
  {Czajkowski},\ and\ \citenamefont {Sekatski}}]{PhysRevX.7.041009}%
  \BibitemOpen
  \bibfield  {author} {\bibinfo {author} {\bibfnamefont {Rafa\l{}}\
  \bibnamefont {Demkowicz-Dobrza\ifmmode~\acute{n}\else \'{n}\fi{}ski}},
  \bibinfo {author} {\bibfnamefont {Jan}\ \bibnamefont {Czajkowski}}, \ and\
  \bibinfo {author} {\bibfnamefont {Pavel}\ \bibnamefont {Sekatski}},\
  }\bibfield  {title} {\enquote {\bibinfo {title} {Adaptive quantum metrology
  under general markovian noise},}\ }\href {\doibase 10.1103/PhysRevX.7.041009}
  {\bibfield  {journal} {\bibinfo  {journal} {Phys. Rev. X}\ }\textbf {\bibinfo
  {volume} {7}},\ \bibinfo {pages} {041009} (\bibinfo {year}
  {2017})}\BibitemShut {NoStop}%
\bibitem [{\citenamefont {Zhou}\ \emph {et~al.}(2018)\citenamefont {Zhou},
  \citenamefont {Zhang}, \citenamefont {Preskill},\ and\ \citenamefont
  {Jiang}}]{Zhou2018}%
  \BibitemOpen
  \bibfield  {author} {\bibinfo {author} {\bibfnamefont {Sisi}\ \bibnamefont
  {Zhou}}, \bibinfo {author} {\bibfnamefont {Mengzhen}\ \bibnamefont {Zhang}},
  \bibinfo {author} {\bibfnamefont {John}\ \bibnamefont {Preskill}}, \ and\
  \bibinfo {author} {\bibfnamefont {Liang}\ \bibnamefont {Jiang}},\ }\bibfield
  {title} {\enquote {\bibinfo {title} {Achieving the heisenberg limit in
  quantum metrology using quantum error correction},}\ }\href {\doibase
  10.1038/s41467-017-02510-3} {\bibfield  {journal} {\bibinfo  {journal}
  {Nature Communications}\ }\textbf {\bibinfo {volume} {9}} (\bibinfo {year}
  {2018}),\ 10.1038/s41467-017-02510-3}\BibitemShut {NoStop}%
\bibitem [{\citenamefont {Layden}\ \emph {et~al.}(2019)\citenamefont {Layden},
  \citenamefont {Zhou}, \citenamefont {Cappellaro},\ and\ \citenamefont
  {Jiang}}]{PhysRevLett.122.040502}%
  \BibitemOpen
  \bibfield  {author} {\bibinfo {author} {\bibfnamefont {David}\ \bibnamefont
  {Layden}}, \bibinfo {author} {\bibfnamefont {Sisi}\ \bibnamefont {Zhou}},
  \bibinfo {author} {\bibfnamefont {Paola}\ \bibnamefont {Cappellaro}}, \ and\
  \bibinfo {author} {\bibfnamefont {Liang}\ \bibnamefont {Jiang}},\ }\bibfield
  {title} {\enquote {\bibinfo {title} {Ancilla-free quantum error correction
  codes for quantum metrology},}\ }\href {\doibase
  10.1103/PhysRevLett.122.040502} {\bibfield  {journal} {\bibinfo  {journal}
  {Phys. Rev. Lett.}\ }\textbf {\bibinfo {volume} {122}},\ \bibinfo {pages}
  {040502} (\bibinfo {year} {2019})}\BibitemShut {NoStop}%
\bibitem [{\citenamefont {G{\'{o}}recki}\ \emph {et~al.}(2020)\citenamefont
  {G{\'{o}}recki}, \citenamefont {Zhou}, \citenamefont {Jiang},\ and\
  \citenamefont {Demkowicz-Dobrza{\'{n}}ski}}]{Gorecki2020optimalprobeserror}%
  \BibitemOpen
  \bibfield  {author} {\bibinfo {author} {\bibfnamefont {Wojciech}\
  \bibnamefont {G{\'{o}}recki}}, \bibinfo {author} {\bibfnamefont {Sisi}\
  \bibnamefont {Zhou}}, \bibinfo {author} {\bibfnamefont {Liang}\ \bibnamefont
  {Jiang}}, \ and\ \bibinfo {author} {\bibfnamefont {Rafa{\l{}}}\ \bibnamefont
  {Demkowicz-Dobrza{\'{n}}ski}},\ }\bibfield  {title} {\enquote {\bibinfo
  {title} {Optimal probes and error-correction schemes in multi-parameter
  quantum metrology},}\ }\href {\doibase 10.22331/q-2020-07-02-288} {\bibfield
  {journal} {\bibinfo  {journal} {{Quantum}}\ }\textbf {\bibinfo {volume}
  {4}},\ \bibinfo {pages} {288} (\bibinfo {year} {2020})}\BibitemShut {NoStop}%
\bibitem [{\citenamefont {Helstrom}(1976)}]{Helstrom}%
  \BibitemOpen
  \bibfield  {author} {\bibinfo {author} {\bibfnamefont {Carl~W.}\ \bibnamefont
  {Helstrom}},\ }\href@noop {} {\emph {\bibinfo {title} {{Quantum Detection and
  Estima- tion Theory}}}}\ (\bibinfo  {publisher} {Academic Press},\ \bibinfo
  {year} {1976})\BibitemShut {NoStop}%
\bibitem [{\citenamefont {Holevo}(2011)}]{HolevoP}%
  \BibitemOpen
  \bibfield  {author} {\bibinfo {author} {\bibfnamefont {Alexander~S}\
  \bibnamefont {Holevo}},\ }\href {\doibase 10.1007/978-88-7642-378-9} {\emph
  {\bibinfo {title} {{Probabilistic and Statistical Aspects of Quantum Theory;
  2nd ed.}}}},\ Publications of the Scuola Normale Superiore. Monographs\
  (\bibinfo  {publisher} {Springer},\ \bibinfo {address} {Dordrecht},\ \bibinfo
  {year} {2011})\BibitemShut {NoStop}%
\bibitem [{\citenamefont {Hayashi}(2011)}]{H}%
  \BibitemOpen
  \bibfield  {author} {\bibinfo {author} {\bibfnamefont {Masahito}\
  \bibnamefont {Hayashi}},\ }\bibfield  {title} {\enquote {\bibinfo {title}
  {{Comparison Between the Cramer-Rao and the Mini-max Approaches in Quantum
  Channel Estimation}},}\ }\href {\doibase 10.1007/s00220-011-1239-4}
  {\bibfield  {journal} {\bibinfo  {journal} {Communications in Mathematical
  Physics}\ }\textbf {\bibinfo {volume} {304}},\ \bibinfo {pages} {689--709}
  (\bibinfo {year} {2011})}\BibitemShut {NoStop}%
\bibitem [{\citenamefont {Hayashi}(2006)}]{H2}%
  \BibitemOpen
  \bibfield  {author} {\bibinfo {author} {\bibfnamefont {Masahito}\
  \bibnamefont {Hayashi}},\ }\bibfield  {title} {\enquote {\bibinfo {title}
  {Parallel treatment of estimation of su(2) and phase estimation},}\ }\href
  {\doibase https://doi.org/10.1016/j.physleta.2006.01.043} {\bibfield
  {journal} {\bibinfo  {journal} {Physics Letters A}\ }\textbf {\bibinfo
  {volume} {354}},\ \bibinfo {pages} {183--189} (\bibinfo {year}
  {2006})}\BibitemShut {NoStop}%
\bibitem [{\citenamefont {Hayashi}\ \emph {et~al.}(2018)\citenamefont
  {Hayashi}, \citenamefont {Vinjanampathy},\ and\ \citenamefont {Kwek}}]{HVK}%
  \BibitemOpen
  \bibfield  {author} {\bibinfo {author} {\bibfnamefont {Masahito}\
  \bibnamefont {Hayashi}}, \bibinfo {author} {\bibfnamefont {Sai}\ \bibnamefont
  {Vinjanampathy}}, \ and\ \bibinfo {author} {\bibfnamefont {L~C}\ \bibnamefont
  {Kwek}},\ }\bibfield  {title} {\enquote {\bibinfo {title} {Resolving
  unattainable cramer{\textendash}rao bounds for quantum sensors},}\ }\href
  {\doibase 10.1088/1361-6455/aaf348} {\bibfield  {journal} {\bibinfo
  {journal} {Journal of Physics B: Atomic, Molecular and Optical Physics}\
  }\textbf {\bibinfo {volume} {52}},\ \bibinfo {pages} {015503} (\bibinfo
  {year} {2018})}\BibitemShut {NoStop}%
\bibitem [{\citenamefont {Hayashi}(2002)}]{LD}%
  \BibitemOpen
  \bibfield  {author} {\bibinfo {author} {\bibfnamefont {Masahito}\
  \bibnamefont {Hayashi}},\ }\bibfield  {title} {\enquote {\bibinfo {title}
  {Two quantum analogues of fisher information from a large deviation viewpoint
  of quantum estimation},}\ }\href {\doibase 10.1088/0305-4470/35/36/302}
  {\bibfield  {journal} {\bibinfo  {journal} {Journal of Physics A:
  Mathematical and General}\ }\textbf {\bibinfo {volume} {35}},\ \bibinfo
  {pages} {7689--7727} (\bibinfo {year} {2002})}\BibitemShut {NoStop}%
\bibitem [{\citenamefont {Katariya}\ and\ \citenamefont
  {Wilde}(2021)}]{Katariya}%
  \BibitemOpen
  \bibfield  {author} {\bibinfo {author} {\bibfnamefont {Vishal}\ \bibnamefont
  {Katariya}}\ and\ \bibinfo {author} {\bibfnamefont {Mark}\ \bibnamefont
  {Wilde}},\ }\bibfield  {title} {\enquote {\bibinfo {title} {{Geometric
  distinguishability measures limit quantum channel estimation and
  discrimination}},}\ }\href {\doibase 10.1007/s11128-021-02992-7} {\bibfield
  {journal} {\bibinfo  {journal} {Quantum Information Processing}\ }\textbf
  {\bibinfo {volume} {20}},\ \bibinfo {pages} {78} (\bibinfo {year}
  {2021})}\BibitemShut {NoStop}%
\bibitem [{\citenamefont {Shor}(1997)}]{Shor}%
  \BibitemOpen
  \bibfield  {author} {\bibinfo {author} {\bibfnamefont {Peter~W.}\
  \bibnamefont {Shor}},\ }\bibfield  {title} {\enquote {\bibinfo {title}
  {Polynomial-time algorithms for prime factorization and discrete logarithms
  on a quantum computer},}\ }\href {\doibase 10.1137/S0097539795293172}
  {\bibfield  {journal} {\bibinfo  {journal} {SIAM Journal on Computing}\
  }\textbf {\bibinfo {volume} {26}},\ \bibinfo {pages} {1484--1509} (\bibinfo
  {year} {1997})}\BibitemShut {NoStop}%
\bibitem [{\citenamefont {Harrow}\ \emph {et~al.}(2009)\citenamefont {Harrow},
  \citenamefont {Hassidim},\ and\ \citenamefont {Lloyd}}]{HHL}%
  \BibitemOpen
  \bibfield  {author} {\bibinfo {author} {\bibfnamefont {Aram~W.}\ \bibnamefont
  {Harrow}}, \bibinfo {author} {\bibfnamefont {Avinatan}\ \bibnamefont
  {Hassidim}}, \ and\ \bibinfo {author} {\bibfnamefont {Seth}\ \bibnamefont
  {Lloyd}},\ }\bibfield  {title} {\enquote {\bibinfo {title} {Quantum algorithm
  for linear systems of equations},}\ }\href {\doibase
  10.1103/PhysRevLett.103.150502} {\bibfield  {journal} {\bibinfo  {journal}
  {Phys. Rev. Lett.}\ }\textbf {\bibinfo {volume} {103}},\ \bibinfo {pages}
  {150502} (\bibinfo {year} {2009})}\BibitemShut {NoStop}%
\bibitem [{\citenamefont {Holevo}(1979)}]{Holevo}%
  \BibitemOpen
  \bibfield  {author} {\bibinfo {author} {\bibfnamefont {A.S.}\ \bibnamefont
  {Holevo}},\ }\bibfield  {title} {\enquote {\bibinfo {title} {Covariant
  measurements and uncertainty relations},}\ }\href {\doibase
  https://doi.org/10.1016/0034-4877(79)90072-7} {\bibfield  {journal} {\bibinfo
   {journal} {Reports on Mathematical Physics}\ }\textbf {\bibinfo {volume}
  {16}},\ \bibinfo {pages} {385--400} (\bibinfo {year} {1979})}\BibitemShut
  {NoStop}%
\bibitem [{\citenamefont {Luis}\ and\ \citenamefont {Pe\ifmmode~\check{r}\else
  \v{r}\fi{}ina}(1996)}]{LP}%
  \BibitemOpen
  \bibfield  {author} {\bibinfo {author} {\bibfnamefont {A.}~\bibnamefont
  {Luis}}\ and\ \bibinfo {author} {\bibfnamefont {J.}~\bibnamefont
  {Pe\ifmmode~\check{r}\else \v{r}\fi{}ina}},\ }\bibfield  {title} {\enquote
  {\bibinfo {title} {Optimum phase-shift estimation and the quantum description
  of the phase difference},}\ }\href {\doibase 10.1103/PhysRevA.54.4564}
  {\bibfield  {journal} {\bibinfo  {journal} {Phys. Rev. A}\ }\textbf {\bibinfo
  {volume} {54}},\ \bibinfo {pages} {4564--4570} (\bibinfo {year}
  {1996})}\BibitemShut {NoStop}%
\bibitem [{\citenamefont {Bu\ifmmode~\check{z}\else \v{z}\fi{}ek}\ \emph
  {et~al.}(1999)\citenamefont {Bu\ifmmode~\check{z}\else \v{z}\fi{}ek},
  \citenamefont {Derka},\ and\ \citenamefont {Massar}}]{BDM}%
  \BibitemOpen
  \bibfield  {author} {\bibinfo {author} {\bibfnamefont {V.}~\bibnamefont
  {Bu\ifmmode~\check{z}\else \v{z}\fi{}ek}}, \bibinfo {author} {\bibfnamefont
  {R.}~\bibnamefont {Derka}}, \ and\ \bibinfo {author} {\bibfnamefont
  {S.}~\bibnamefont {Massar}},\ }\bibfield  {title} {\enquote {\bibinfo {title}
  {Optimal quantum clocks},}\ }\href {\doibase 10.1103/PhysRevLett.82.2207}
  {\bibfield  {journal} {\bibinfo  {journal} {Phys. Rev. Lett.}\ }\textbf
  {\bibinfo {volume} {82}},\ \bibinfo {pages} {2207--2210} (\bibinfo {year}
  {1999})}\BibitemShut {NoStop}%
\bibitem [{\citenamefont {Hayashi}(2016)}]{H3}%
  \BibitemOpen
  \bibfield  {author} {\bibinfo {author} {\bibfnamefont {Masahito}\
  \bibnamefont {Hayashi}},\ }\bibfield  {title} {\enquote {\bibinfo {title}
  {{Fourier Analytic Approach to Quantum Estimation of Group Action}},}\ }\href
  {\doibase 10.1007/s00220-016-2738-0} {\bibfield  {journal} {\bibinfo
  {journal} {Communications in Mathematical Physics}\ }\textbf {\bibinfo
  {volume} {347}},\ \bibinfo {pages} {3--82} (\bibinfo {year}
  {2016})}\BibitemShut {NoStop}%
\bibitem [{\citenamefont {{Hayashi}}\ \emph {et~al.}(2021)\citenamefont
  {{Hayashi}}, \citenamefont {{Hora}},\ and\ \citenamefont
  {{Yanagida}}}]{Yanagida}%
  \BibitemOpen
  \bibfield  {author} {\bibinfo {author} {\bibfnamefont {Masahito}\
  \bibnamefont {{Hayashi}}}, \bibinfo {author} {\bibfnamefont {Akihito}\
  \bibnamefont {{Hora}}}, \ and\ \bibinfo {author} {\bibfnamefont {Shintarou}\
  \bibnamefont {{Yanagida}}},\ }\bibfield  {title} {\enquote {\bibinfo {title}
  {{Asymmetry of tensor product of asymmetric and invariant vectors arising
  from Schur-Weyl duality based on hypergeometric orthogonal polynomial}},}\
  }\href@noop {} {\bibfield  {journal} {\bibinfo  {journal} {arXiv e-prints}\
  ,\ \bibinfo {eid} {arXiv:2104.12635}} (\bibinfo {year} {2021})},\ \Eprint
  {http://arxiv.org/abs/2104.12635} {arXiv:2104.12635 [math-ph]} \BibitemShut
  {NoStop}%
\end{thebibliography}%


\begin{thebibliography}{10}
\bibitem{LP}
A. Luis and J. Perina, ``Optimum phase-shift estimation and the quantum description of the phase difference,'' 
{\em Phys. Rev. A} 54, 4564 (1996 ).

\bibitem{BDM}
V. Buzek, R. Derka, and S. Massar, ``Optimal quantum clocks,'' 
{\em Phys. Rev. Lett.} 82, 2207 (1999). 

\bibitem{Cleve}
 Cleve R, Ekert A, Macchiavello C and Mosca M 1998 Quantum Algorithm Revisited, 
 Proc. Soc R. Lond. A
454- 339
\bibitem{H}
M. Hayashi, ``Comparison between the cramer-rao and the mini-max
approaches in quantum channel estimation,'' 
Communications in Mathematical Physics, vol. 304, no. 3, pp. 689 -- 709, 2011.

\bibitem{IH09}	H. Imai and {M. Hayashi}, 
``Fourier analytic approach to phase estimation in quantum systems,'' 
{\em New Journal of Physics}, Vol. 11 No 4, 043034 (2009). 

\bibitem{BBM1}
Bagan E, Baig M and Muñoz-Tapia R 2004 Entanglement-assisted alignment of reference frames using a
dense covariant coding, Phys. Rev. A 69 050303

\bibitem{BBM2}
Bagan E, Baig M and Muñoz-Tapia R 2004 Quantum reverse-engineering and reference-frame alignment
without nonlocal correlations, Phys. Rev. A 70 030301

\bibitem{CDPS}
Chiribella G, D'Ariano G M, Perinotti P and Sacchi M F 2004 Efficient use of quantum resources for the
transmission of a reference frame, Phys. Rev. Lett. 93 180503
\bibitem{CDS}
Chiribella G, D'Ariano G M and Sacchi M F 2005 Optimal estimation of group transformations using
entanglement, Phys. Rev. A 72 042338
\bibitem{H2}
Hayashi M 2006 Parallel treatment of estimation of SU(2) and phase estimation, Phys. Lett. A 354 183.


\bibitem{Katariya}
Vishal Katariya and Mark M. Wilde,
``Geometric distinguishability measures limit quantum channel estimation and discrimination,''
April 2020. arXiv:2004.10708.

\bibitem{Fujiwara}
 Akio Fujiwara and Hiroshi Imai. A fibre bundle over manifolds of quantum channels and its application to quantum statistics. Journal of Physics A: Mathematical and
Theoretical, 41(25):255304, June 2008.

\bibitem{YCH}
Y. Yang, G. Chiribella, and M. Hayashi, 
``Attaining the Ultimate Precision Limit in Quantum State Estimation,''
{\em Communications in Mathematical Physics,}
vol. 368(1), 223 -- 293 (2019). 

\bibitem{Kahn}
Jonas Kahn,
``Fast rate estimation of an unitary operation in SU(d),''
{\em Physical Review A} 75, 022326, 2007.

\bibitem{Holevo}
A. S. Holevo, ``Covariant measurements and uncertainty relations,'' 
{\em Rep. Math. Phys.} 16 385 -- 400 (1979).

\bibitem{HolevoP}
A. S. Holevo, {\em Probabilistic and Statistical Aspects of Quantum Theory} (Amsterdam: North-Holland) 1982.
(Originally published in Russian 1980)

\bibitem{H3}
{M. Hayashi}, 
``Fourier Analytic Approach to Quantum Estimation of Group Action,'' 
{\em Communications in Mathematical Physics}, Volunme 347, Nunber 1, 3 -- 82 (2016). 

\bibitem{HVK}
{M. Hayashi}, S. Vinjanampathy, and L.-C. Kwek, 
``Resolving unattainable Cramer-Rao bounds for quantum sensors,''
{\em Journal of Physics B: Atomic, Molecular and Optical Physics},
vol. 52 015503 (2019). 

\bibitem{LD}
{M. Hayashi}, ``Two quantum analogues of Fisher information from a large deviation viewpoint of quantum estimation," 
{\em Journal of Physics A: Mathematical and General}, Vol.35, No.36, pp.7689-7727 (2002).

\bibitem{Yanagida}
M. Hayashi, A. Hora, and S. Yanagida,
``Asymmetry of tensor product of asymmetric and invariant vectors %
 arising from Schur-Weyl duality based on hypergeometric orthogonal polynomial,''
arXiv:2104.12635 (2021).

\bibitem{PI}
M. Hayashi,
``Phase estimation with photon number constraint,''
{\em Progress in Informatics}, No.8 March 2011 p. 81-87.
\end{thebibliography}

\appendix
%\widetext

\setcounter{theorem}{0}
\setcounter{lemma}{0}
\setcounter{figure}{0}
\renewcommand{\thefigure}{S\arabic{figure}}
\renewcommand{\thelemma}{S\arabic{lemma}}
\renewcommand{\thetheorem}{S\arabic{theorem}}
\renewcommand{\thecorollary}{S\arabic{corollary}}
\renewcommand{\theprotocol}{S\arabic{protocol}}

\newtheorem{innercustomthm}{Protocol}
\newenvironment{protocolc}[1]
  {\renewcommand\theinnercustomthm{#1}\innercustomthm}
  {\endinnercustomthm}

\if0
\section{Channel model and the Heisenberg scaling}\label{sec:model}
Here, we consider a model by using the unitary 
$U_\theta:= e^{i \theta/2}|0\rangle \langle 0|
+e^{-i \theta/2}|1\rangle \langle 1|$ 
on the system ${\cal H}$ spanned by $\{|0\rangle,|1\rangle\}$ with the phase damping noise before or after the application of the unitary 
$U_\theta$, i.e., our model is given as the channel.
\begin{align}
\Lambda_{\theta,p}(\rho):= (1-p)U_\theta\rho U_\theta^\dagger
+p |0\rangle \langle 0| \rho |0\rangle \langle 0| 
+p |1\rangle \langle 1| \rho |1\rangle \langle 1|, 
\end{align}
where $\theta \in (-\pi,\pi]$.

For the local estimation, the Cramer-Rao bound provides a lower bound on the standard deviation of the estimator as $\delta \hat{\theta}\geq \frac{1}{\sqrt{m J_{\theta,p}^{SLD}}}$ where $\hat{\theta}$ is an unbiased estimator and $\delta \hat{\theta}=\sqrt{E[( \hat{\theta}-\theta)^2]}$ is the standard deviation, $m$ is the number of times that the measurement is repeated and  $J_{\theta,p}^{SLD}$ is the quantum Fisher information which can depends on the value of $\theta$ and $p$. The quantum Fisher information can be computed as $J_{\theta,p}^{SLD}=Tr(\rho_{\theta}L_S^2)$, where  $\rho_{\theta}=\Lambda_{\theta,p}(\rho)$ is the output state and $L_S$ is the symmetric logarithm operator(SLD) which can be obtained from the equation 
\begin{equation}
    \frac{\partial \rho_{\theta}}{\partial \theta}=\frac{1}{2}(L_S\rho_{\theta}+\rho_{\theta}L_S).
\end{equation}

%\section{Various measure for Heisenberg scaling}
The Heisenberg scaling is often considered in terms of the Fisher information where $J_{\theta,p}^{SLD}$ scales as $n^2$ with $n$ as the number of times that the channel is employed. This, however, only characterized the precision at one point. It is often important (e.g.~for Bayesian estimation) to consider the average precision over all possible values of the parameter. Thus even the Fisher information has the Heisenberg scaling, it may not be possible to construct an estimator that can achieve the Heisenberg scaling at all points (in terms of the mean square error) even with adaptive estimations. This issue has been discussed in \cite{HVK,H,H2}. Here we address the question of whether the Heisenberg scaling can be achieved in terms of the mean square error in the Bayesian estimation.%Therefore, it is needed to address whether Heisenberg scaling is achieved in terms of mean square error. 

First, due to the Cram\'{e}r-Rao bound, under a natural condition, the mean square error, denoted as $v_n$, is lower bounded by the mean of the inverse Fisher information. Hence, if the Heisenberg scaling can be achieved in terms of the mean square error, it can also be achieved in terms of the Fisher information, i.e., the Heisenberg scaling in terms of the Fisher information
is a necessarily condition for the Heisenberg scaling in terms of the mean square error. We will first provide a necessary condition for the Fisher information to attain the Heisenberg scaling, then provide a construction to show that under the same condition the Heisenberg scaling in terms of the mean square error can be achieved.     %That is, the discussion in Section \ref{S:RLD} gives a necessarily condition for Heisenberg scaling in terms of mean square error.
%The following sections shows that 
%Heisenberg scaling is attained in terms of mean square error
%under the same condition as the condition given in Section \ref{S:RLD}.

In a practical viewpoint, a more important figure of merit than the mean square error is the probability $P_{\theta,p,n} \{ |\hat{\theta}-\theta| > c\}$ with a certain threshold $c$, here $\hat{\theta}$ is the estimator and $P_{\theta,p,n}$ is the distribution for the estimator which depends on the unknown parameter $\theta$ and the number uses of the channel, $n$. In the current setting, the parameter space is $(-\pi,\pi]$. In this case, we interpret $|\hat{\theta}-\theta|$ as $\min_{\ell \in \mathbb{Z}} |\hat{\theta}+2 \ell \pi -\theta|$. There are two choices for the threshold $c$. One choice is to take $c$ as a constant and look at the change of the probability distribution with respect to $n$. The other choice is to allow $c$ changing with $n$ but keep the limiting probability as a constant. The analysis on the former setting is called the large deviation analysis and has been studied in \cite{LD}. Here, we will consider the latter case. For this aim, we choose a certain sequence $\{a_n\}$ with $a_n \to \infty$ such that 
the limit of the probability $P_{\theta ,p ,n} \{ a_n(\hat{\theta}-(\theta )) \in (b,c) \}$ 
forms a probability distribution $\mu( (b,c) )$ on an open interval $(b,c)$.
In this case, the probability distribution $\mu$ is called the limiting distribution. The shot noise limit corresponding to the cases where $a_n$ is taken as $\sqrt{n}$ and the Heisenberg limit corresponds to the cases where $a_n$ can be taken as ${n}$. The limiting distribution provides more information on the estimation thus is a better quantity as the figure of merit. However, the studies in quantum metrology in terms of the limiting distribution is extremely limited\cite{YCH}. % the recent paper \cite{YCH} addressed the quantum state estimation in terms of the probability distribution with $a_n=\sqrt{n}$. 
%That is, the remaining sections discuss 
In this paper we will study the achievement of the Heisenberg scaling in terms of both the mean square error and the limiting distribution.
\fi
%\section{Necessary condition from Fisher information}
\section{Quantum Fisher information under noise}\label{Sec:necc}
Here we give details and extended discussions on the analysis of QFI.  The main result is that Heisenberg scaling can only be achieved when noise strength $p\leq O(n^{-1})$.  As a further note, it is easy to verify that $p$ would need to be sub-constant to achieve any scaling advantage over the shot-noise limit. 
%\subsection{Upper bound on the Fisher information}\label{S:RLD}

The following variant called the right logarithmic derivative (RLD) QFI will be useful: 
$J_{\theta}^{\rld}=\tr(L_R^\dagger \rho_{\theta} L_R)$ where  $L_R$ is the RLD operator which can be obtained from
\begin{align}
    \frac{\partial \rho_{\theta}}{\partial \theta}=\rho_{\theta}L_R.
\end{align}
  Note that the RLD QFI is an upper bound on the SLD QFI, namely $J_{\theta}^{\sld} \leq J_{\theta}^{\rld}$.  We shall be interested in various QFIs associated with our model channel $\Lambda_{\theta,p}$.  We denote the SLD (RLD) QFI of the output state of $\Lambda_{\theta,p}$ acting on input state $\rho$  as ${J}_{\theta,p, \rho}^{\sld(\rld)}$, and then the channel SLD (RLD) QFI of  $\Lambda_{\theta,p}$  given by maximizing over all input states as $\mathcal{J}_{\theta,p}$, namely $\mathcal{J}_{\theta,p}^{\sld(\rld)} := \max_\rho {J}_{\theta,p, \rho}^{\sld(\rld)}$.  
  %\blue
  {We shall also consider the maximum SLD (RLD) QFI %$\mathcal{J}_{\theta,p}^{\sld(\rld)}$
  under $n$ uses of  $\Lambda_{\theta,p}$ in the parallel or adaptive schemes, respectively denoted by %additional superscripts ``$(n)$'' or ``$[n]$'' as
  $\mathcal{J}_{\theta,p}^{\sld(\rld),(n)}$ or $\mathcal{J}_{\theta,p}^{\sld(\rld),[n]}$.}

%\subsection{Maximum RLD QFI}\label{app:rld}
\subsection{RLD QFI}\label{app:rld}

The channel RLD QFI $\mathcal{J}_{\theta,p}^{\rld}$ can be computed from the Choi matrix.
Here the Choi matrix of $\Lambda_{\theta,p}$ is given by
   \begin{align}
C_{\Lambda_{\theta,p}}
=& 2(1-p)|\Phi_\theta\rangle \langle \Phi_\theta|+p (|0,0\rangle \langle 0,0|+|1,1\rangle \langle 1,1|) \nonumber\\
=& (2-p)|\Phi_\theta\rangle \langle \Phi_\theta|+p |\Phi_\theta^\perp\rangle \langle \Phi_\theta^\perp|
\end{align}
where $\Phi_\theta:= \frac{1}{\sqrt{2}}( e^{i \theta/2}|0,0\rangle+ e^{-i \theta/2}|1,1\rangle)$
and
$\Phi_\theta^\perp:= \frac{i}{\sqrt{2}}( e^{i \theta/2}|0,0\rangle- e^{-i \theta/2}|1,1\rangle)$.
The derivative $D_{\Lambda_{\theta,p}}:=\frac{d}{d\theta}C_{\Lambda_{\theta,p}}$ is 
\begin{align}
D_{\Lambda_{\theta,p}}= (1-p) (|\Phi_\theta^{\perp}\rangle \langle \Phi_\theta|+|\Phi_\theta\rangle \langle \Phi_\theta^{\perp}|).
\end{align}
%and then the derivative $D_{\theta,p}:=\frac{d}{d\theta}R_{\theta,p} = (1-p) (|\Phi_\theta^{\perp}\rangle \langle \Phi_\theta|+|\Phi_\theta\rangle \langle \Phi_\theta^{\perp}|)$.
Then  by using the formula \cite[Theorem 1]{H},  we obtain that
   \begin{align}
&
\mathcal{J}_{\theta,p}^{\rld} 
=
\| \Tr_{\mathrm{Out}} D_{\Lambda_{\theta,p}} C_{\Lambda_{\theta,p}}^{-1}D_{\Lambda_{\theta,p}}\|\nonumber \\
%=&
%\Big\| \Tr_{Out} \Big( (1-p)(|\Phi_\theta^{\perp}\rangle \langle \Phi_\theta|+|\Phi_\theta\rangle \langle \Phi_\theta^{\perp}|)\Big) \Big(
%(2-p)^{-1} |\Phi_\theta\rangle \langle \Phi_\theta|+p^{-1}
% |\Phi_\theta^\perp\rangle \langle \Phi_\theta^\perp|
%\Big)
%\Big( (1-p)(|\Phi_\theta^{\perp}\rangle \langle \Phi_\theta|+|\Phi_\theta\rangle \langle \Phi_\theta^{\perp}|) \Big)\Big\|\\
=&
\Big\| \Tr_{\mathrm{Out}} 
\Big(
\frac{(1-p)^2}{p} |\Phi_\theta\rangle \langle \Phi_\theta|+
\frac{(1-p)^2}{2-p} |\Phi_\theta^\perp\rangle \langle \Phi_\theta^\perp|
\Big)
\Big\|\nonumber \\
=&
\Big\| 
\frac{2(1-p)^2}{p(2-p)} I 
\Big\|
= \frac{2(1-p)^2}{p(2-p)} ,
%\frac{(1-p)^2}{p(2-p)}.
\end{align}
where ``$\mathrm{Out}$'' denotes the output system of the channel, and $\|X\|$ denotes the matrix norm of $X$.
   
For the parallel scheme with $n$ uses of $\Lambda_{\theta,p}$, i.e.~$\Lambda_{\theta,p}^{\otimes n}$,
we simply have \cite[Corollary 1]{H} 
\begin{align}
  \mathcal{J}_{\theta,p}^{\rld,(n)} = n   \mathcal{J}_{\theta,p}^{\rld}=n\frac{2(1-p)^2}{p(2-p)}.   \Label{eq:rld}
\end{align}  
Recently, it has been shown that  for the general adaptive scheme (Fig. \ref{FF1})
the RLD QFI is additive \cite[Theorem 18]{Katariya}, so we again have 
\begin{align}
    \mathcal{J}_{\theta,p}^{\rld,[n]} = n  \mathcal{J}_{\theta,p}^{\rld} = n\frac{2(1-p)^2}{p(2-p)}.
\label{MNA}
\end{align}
Therefore, to achieve Heisenberg scaling, it is necessary that $p\leq O(n^{-1})$.   In particular, when $p=\frac{\epsilon}{n}$, we have
\begin{align}
    \mathcal{J}_{\theta,p}^{\rld,(n)} = \mathcal{J}_{\theta,p}^{\rld,[n]} = \frac{1}{\epsilon}n^2 + O(n).
\end{align}
%This fact shows the following.
%Thus with $n$ uses of the unknown channel $\Lambda_{\theta,p} $ is available,
%we consider sequence of estimator ${\cal E}_n$ 
%that is composed of choice of the measurement and the input strategy including adaptive method.
%Let $P_{\theta,p}({\cal E}_n) $ be the distribution of the outcome when 
%the estimator ${\cal E}_n$ is applied and the true channel is %$\Lambda_{\theta,p} $.
%Let $J_{\theta,p}({\cal E}_n)$ be the classical Fisher information for 
%the distribution family $\{P_{\theta,p}({\cal E}_n)\}$ for the parameter $\theta$.
%The previous result \cite[Theorem 18]{Katariya} implies that
%\begin{align}
%J_{\theta,p}({\cal E}_n) \le n  J_{\theta,p}^{R}.
%\end{align}
%Hence, %when $p$ goes to zero slower than the order $O(n^{-1})$, we cannot have Heisenberg scaling.
%to get the Heisenberg scaling, $p$ needs to go to zero at least in the order $O(n^{-1})$.
%\subsection{SLD with parallel scheme}\label{SLD1}

%\subsection{Maximum SLD QFI}
\subsection{Achievable SLD QFI}
Now since RLD QFI upper bounds SLD QFI, we conclude that the standard SLD QFI cannot achieve the Heisenberg scaling if $p$ tends to zero more slowly than the order $O(n^{-1})$.   

We are now going to show that when $p\leq O(n^{-1})$ is satisfied, the SLD QFI can indeed achieve the Heisenberg scaling as well as the RLD QFI.  This can be seen by considering the GHZ state $|\Psi^{(n)}\rangle:=
\frac{1}{\sqrt{2}}(|0,\cdots,0\rangle+|1,\cdots,1\rangle)$ as the input state.
First consider the parallel scheme.
Even in the presence of the phase damping noise, the state belongs to the subspace ${\cal H}_o$ spanned by 
$|0,\cdots,0\rangle$ and $|1,\cdots,1\rangle$.
When  dephasing  acts at least on one qubit,
the state becomes the completely mixed state on ${\cal H}_o$.
That is,
\begin{align}
& \Lambda_{\theta,p}^{\otimes n}
(|\Psi^{(n)}\rangle \langle\Psi^{(n)}|)
\nonumber\\
=&
(1-p)^n|\Psi_\theta^{(n)}\rangle\langle \Psi_\theta^{(n)}|
+
\frac{(1-(1-p)^n)}{2}I_{{\cal H}_o} \nonumber \\
=&
\frac{(1+(1-p)^n)}{2} |\Psi_\theta^{(n)}\rangle\langle \Psi_\theta^{(n)}|
\nonumber\\&+
\frac{(1-(1-p)^n)}{2}|\Psi_\theta^{\perp,(n)}\rangle\langle \Psi_\theta^{\perp,(n)}|,
\end{align}
where 
$|\Psi_\theta^{(n)}\rangle:= \frac{1}{\sqrt{2}}(e^{i n\theta/2}|0,\cdots,0\rangle+e^{-in \theta/2}|1,\cdots,1\rangle)$
and
$|\Psi_\theta^{\perp,(n)}\rangle:= \frac{1}{\sqrt{2}}(e^{i n\theta/2}|0,\cdots,0\rangle-e^{-in \theta/2}|1,\cdots,1\rangle)$.
%Hence, we have
%\begin{align}
%&\frac{d}{d\theta}
%\Lambda_{\theta,p}^{\otimes n}(|\Psi^{(n)}\rangle \langle\Psi^{(n)}|)
%\\&=
%\frac{d}{d\theta}
%(1-p)^n |\Psi^{(n)}_\theta\rangle \langle\Psi^{(n)}_\theta|
%\\&=
%\frac{n}{2}(1-p)^n 
%(|\Psi^{(\perp,n)}_\theta\rangle \langle\Psi^{(n)}_\theta|
%+|\Psi^{(n)}_\theta\rangle \langle\Psi^{\perp,(n)}_\theta|).
%\end{align}
%Since 
%\begin{align}
%\left(
%\begin{array}{cc}
%0 & 1 \\
%1 & 0
%\end{array}
%\right)
%=
%\left(
%\begin{array}{cc}
%0& \frac{2}{a+b}  \\
%\frac{2}{a+b}  &0
%\end{array}
%\right)
%\circ\left(
%\begin{array}{cc}
%a & 0 \\
%0 & b
%\end{array}
%\right), \\
%\Tr
%\left[\left(
%\begin{array}{cc}
%0& \frac{2}{a+b}  \\
%\frac{2}{a+b}  &0
%\end{array}
%\right)^2
%\left(
%\begin{array}{cc}
%a & 0 \\
%0 & b
%\end{array}
%\right)\right]=\frac{4}{a+b},
%\end{align}
%with $A\circ B=(AB+BA)/2$,
%SLD is 
%and $\Tr P \Lambda_{\theta,p}^{\otimes n}(|\Psi^{(n)}\rangle \langle\Psi^{(n)}|)P=(1-\frac{p}{2})^n$,
The SLD is given by
\begin{align}
n(1-p)^n
(|\Psi^{(\perp,n)}_\theta\rangle \langle\Psi^{(n)}_\theta|
+|\Psi^{(n)}_\theta\rangle \langle\Psi^{\perp,(n)}_\theta|).
\end{align}
%\blue
{Hence, the SLD QFI of the state family $\Lambda_{\theta,p}^{\otimes n}
(|\Psi^{(n)}\rangle \langle\Psi^{(n)}|)$ is $n^2(1-p)^{2n}$, which implies that 
 \begin{align}
  \mathcal{J}_{\theta,p}^{\sld,(n)}\ge  
n^2(1-p)^{2n}.
\end{align}
Since $  \mathcal{J}_{\theta,p}^{\rld,(n)}\ge \mathcal{J}_{\theta,p}^{\sld,(n)}$,
combining with Eq.~\eqref{MNA}, for $p = \frac{\epsilon}{n}$
we have  
 \begin{align}
\mathcal{J}_{\theta,p}^{\sld,(n)}= n^2 [e^{-2\epsilon}+o(1)].
\label{LGU}
\end{align}

However, we stress again that the analysis of QFI only {guarantees an understanding of local precision}.  
%The Fisher information, however, only characterize the precision locally at $\theta=\theta_0$. 
When we employ the optimal local estimator,  
the mean square error behaves as $1/[n^2 (e^{-2\epsilon}+o(1))]= e^{2\epsilon}/n^2+o(1/n^2)$,
but this estimator in general does not work globally or even in certain neighborhood of the point, as pointed out in \cite{H}.
%Moreover, even among a neighborhood of $\theta_0$, it does not work with respect to the minmax criterion (consider the worst point in the neighborhood) when the radius of the neighborhood of $\theta_0$ is a constant, as pointed in \cite{H}.
%However, at $\theta=\theta_0+\delta/n$,
%the MSE behaves as $\delta^2/n^2\cdot \frac{\epsilon}{2}+e^{3\epsilon/2}/n^2+o(1/n^2)$.
%\section{SLD with adaptive scheme of one qubit memory}\label{SLD2}

\subsection{SLD QFI for a practical adaptive strategy}\label{NNP}
Since $\mathcal{J}_{\theta,p}^{\rld,[n]} \ge\mathcal{J}_{\theta,p}^{\sld,[n]} \ge \mathcal{J}_{\theta,p}^{\sld,(n)}$,
the maximum SLD QFI $\mathcal{J}_{\theta,p}^{\sld,[n]}$ in the adaptive scheme has 
the same asymptotic behavior as Eq.~\eqref{LGU}.
Here we show that this asymptotic behavior can be achieved by a simple adaptive strategy on one qubit.
We consider $n$ repetitive applications of the channel $\Lambda_{\theta,p}$, i.e.,
 $\Lambda_{\theta,p}^{\circ n}:=
\underbrace{\Lambda_{\theta,p}\circ \cdots \circ\Lambda_{\theta,p} }_{n}$, acting on the input state $|+\rangle:=
\frac{1}{\sqrt{2}}(|0\rangle+|1\rangle)$. % as shown in the following.
The output state is given by
\begin{align}
\Lambda_{\theta,p}^{\circ n}(|+\rangle \langle+|)
=
(1-p)^n |\Psi_{\theta,n}\rangle \langle\Psi_{\theta,n}|
+
(1-(1-p)^n )\frac{I}{2},
\end{align}
where 
$|\Psi_{\theta,n}\rangle:= \frac{1}{\sqrt{2}}(e^{i n\theta/2}|0\rangle+e^{-in \theta/2}|1\rangle)$
and
$|\Psi_\theta^{\perp}\rangle:= \frac{1}{\sqrt{2}}(e^{i n\theta/2}|0\rangle-e^{-in \theta/2}|1\rangle)$.
Then, the SLD QFI of the family $\Lambda_{\theta,p}^{\circ n}(|+\rangle \langle+|)$
is again calculated to be $n^2(1-p)^{2n}$.
When $p=\frac{\epsilon}{n}$, it is 
$n^2 [e^{-2\epsilon}+o(1)]$, namely the Heisenberg scaling of SLD QFI is achieved in the same way as Eq.~\eqref{LGU}.}
Again the optimal estimator also only works locally. %estimator also does not work globally.
A key finding of our work is that this estimator can be modified to achieve the Heisenberg scaling globally.

%As a further note, it is easy to check that $p$ would need to be sub-constant to achieve any scaling advantage over the shot-noise limit according to Eq.~(\ref{eq:rld}).  %\rednew{Edit}

\if0
Since $J_{\theta,p}^{SLD}$ is upper bounded by the RLD Fisher information, $J_{\theta,p}^{SLD}$ can not have the Heisenberg scaling when $p$ goes to zero slower than the order $O(n^{-1})$. On the other hand, when $p$ is in the order of $O(n^{-1})$, not only the RLD Fisher information can achieve the Heisenberg scaling, but $J_{\theta,p}^{SLD}$ itself can also achieve the Heisenberg scaling. This can be seen by evaluating the SLD Fisher information with the input state $|\Psi^{(n)}\rangle:=
\frac{1}{\sqrt{2}}(|0,\cdots,0\rangle+|1,\cdots,1\rangle)$.
Even at the presence of the phase damping noise, the state belongs to the subspace ${\cal H}_o$ spanned by 
$|0,\cdots,0\rangle$ and $|1,\cdots,1\rangle$.
When the noise, i.e., the dephasing operation acts at least on one qubit,
the state is the completely mixed state on ${\cal H}_o$.
That is,
\begin{align}
& \Lambda_{\theta,p}^{\otimes n}
(|\Psi^{(n)}\rangle \langle\Psi^{(n)}|)
\\
=&
(1-p)^n|\Psi_\theta^{(n)}\rangle\langle \Psi_\theta^{(n)}|
+
\frac{(1-(1-p)^n)}{2}I_{{\cal H}_o} \\
=&
\frac{(1+(1-p)^n)}{2} |\Psi_\theta^{(n)}\rangle\langle \Psi_\theta^{(n)}|
\\&+
\frac{(1-(1-p)^n)}{2}|\Psi_\theta^{\perp,(n)}\rangle\langle \Psi_\theta^{\perp,(n)}|,
\end{align}
where 
$|\Psi_\theta^{(n)}\rangle:= \frac{1}{\sqrt{2}}(e^{i n\theta/2}|0,\cdots,0\rangle+e^{-in \theta/2}|1,\cdots,1\rangle)$
and
$|\Psi_\theta^{\perp,(n)}\rangle:= \frac{1}{\sqrt{2}}(e^{i n\theta/2}|0,\cdots,0\rangle-e^{-in \theta/2}|1,\cdots,1\rangle)$.
%Hence, we have
%\begin{align}
%&\frac{d}{d\theta}
%\Lambda_{\theta,p}^{\otimes n}(|\Psi^{(n)}\rangle \langle\Psi^{(n)}|)
%\\&=
%\frac{d}{d\theta}
%(1-p)^n |\Psi^{(n)}_\theta\rangle \langle\Psi^{(n)}_\theta|
%\\&=
%\frac{n}{2}(1-p)^n 
%(|\Psi^{(\perp,n)}_\theta\rangle \langle\Psi^{(n)}_\theta|
%+|\Psi^{(n)}_\theta\rangle \langle\Psi^{\perp,(n)}_\theta|).
%\end{align}

%Since 
%\begin{align}
%\left(
%\begin{array}{cc}
%0 & 1 \\
%1 & 0
%\end{array}
%\right)
%=
%\left(
%\begin{array}{cc}
%0& \frac{2}{a+b}  \\
%\frac{2}{a+b}  &0
%\end{array}
%\right)
%\circ\left(
%\begin{array}{cc}
%a & 0 \\
%0 & b
%\end{array}
%\right), \\
%\Tr
%\left[\left(
%\begin{array}{cc}
%0& \frac{2}{a+b}  \\
%\frac{2}{a+b}  &0
%\end{array}
%\right)^2
%\left(
%\begin{array}{cc}
%a & 0 \\
%0 & b
%\end{array}
%\right)\right]=\frac{4}{a+b},
%\end{align}
%with $A\circ B=(AB+BA)/2$,
%SLD is 

%and $\Tr P \Lambda_{\theta,p}^{\otimes n}(|\Psi^{(n)}\rangle \langle\Psi^{(n)}|)P=(1-\frac{p}{2})^n$,
The SLD is given by
\begin{align}
n(1-p)^n
(|\Psi^{(\perp,n)}_\theta\rangle \langle\Psi^{(n)}_\theta|
+|\Psi^{(n)}_\theta\rangle \langle\Psi^{\perp,(n)}_\theta|),
\end{align}
from which we can obtain
\begin{align}
J_{\theta, p,\Psi^{(n)}}^{SLD,(n)}= 
n^2(1-p)^{2n}.
\label{LGU}
\end{align}
When $p=\frac{\epsilon}{n}$, we have
\begin{align}
J_{\theta, \frac{\epsilon}{n},\Psi^{(n)}}^{SLD,(n)}= n^2 [e^{-2\epsilon}+o(1)].
\end{align}

%However, in this input state, 
%the performance of the estimation is not so good due to the following reason.
%In this case, 
%when %we employ the optimal measurement and 
The Fisher information, however, only characterize the precision locally at $\theta=\theta_0$. At this point, when we employ the optimal local estimator,  
the mean square error behaves as $1/(n^2 (e^{-2\epsilon}+o(1)))= e^{2\epsilon}/n^2+o(1/n^2)$.
However, the estimator in general does not work globally, as pointed out in \cite{H}.
Moreover, even among a neighborhood of $\theta_0$, it does not work with respect to the minmax criterion (consider the worst point in the neighborhood) when the radius of the neighborhood of $\theta_0$ is a constant, as pointed in \cite{H}.
%However, at $\theta=\theta_0+\delta/n$,
%the MSE behaves as $\delta^2/n^2\cdot \frac{\epsilon}{2}+e^{3\epsilon/2}/n^2+o(1/n^2)$.

%\section{SLD with adaptive scheme of one qubit memory}\label{SLD2}
Similarly in the adaptive scheme, the Fisher information can also achieve the Heisenberg scaling when $p=\frac{\epsilon}{n}$. This can be seen by considering the input state $|+\rangle:=
\frac{1}{\sqrt{2}}(|0\rangle+|1\rangle)$ and apply $n$ channels sequentially on the state,
\begin{align}
\Lambda_{\theta,p}^{n}(|+\rangle \langle+|)
=
(1-p)^n |\Psi_{\theta,n}\rangle \langle\Psi_{\theta,n}|
+
(1-(1-p)^n )\frac{I}{2},
\end{align}
where 
$|\Psi_{\theta,n}\rangle:= \frac{1}{\sqrt{2}}(e^{i n\theta/2}|0\rangle+e^{-in \theta/2}|1\rangle)$
and
$|\Psi_\theta^{\perp}\rangle:= \frac{1}{\sqrt{2}}(e^{i n\theta/2}|0\rangle-e^{-in \theta/2}|1\rangle)$.
Then,
%\begin{align}
%\frac{d}{d\theta}\Lambda_{\theta,p}^{n}(|\Psi_\theta\rangle \langle\Psi_\theta|)
%=
%\frac{n}{2}(1-p)^n |\Psi_{\theta,n}^\perp\rangle \langle\Psi_{\theta,n}|
%+
%\frac{n}{2}(1-p)^n |\Psi_{\theta,n}\rangle \langle\Psi_{\theta,n}^\perp|,
%\end{align}
%Hence, 
the SLD Fisher information is given by
\begin{align}
J_{\theta, p,\Psi}^{SLD,(n)}=  n^2(1-p)^{2n}.
\end{align}
This is the same as Eq.~\eqref{LGU}.
Thus when $p=\frac{\epsilon}{n}$, we have
\begin{align}
J_{\theta, \frac{\epsilon}{n},\Psi}^{SLD,(n)}= n^2 [e^{-2\epsilon}+o(1)].
\end{align}
This also only works locally. %estimator also does not work globally.
In the next section, however, we will show that this estimator can be modified to achieve the Heisenberg scaling globally.

\red{Will put some discussions on this $1/n$ scaling.}
\fi

\section{Global phase estimation}\Label{Sec:parallel}
We have shown that 
a necessary condition for Heisenberg scaling is $p=O(n^{-1})$, and the main goal here is to prove that 
the Heisenberg scaling can be  achieved globally when the noise parameter $p$ behaves as $\epsilon/n$.  Along the way,  a comprehensive analysis of global phase estimation in both noiseless and noisy cases is given.

Before diving into the derivations, we overview the state of knowledge.
Since the Cram\'{e}r-Rao approach only addresses the precision of local estimation,
we need new methods to study the achievability of Heisenberg scaling for global estimation.
It is known in the noiseless case that global phase estimation can be done using the notion of group covariant estimators \cite{Holevo,HolevoP}. % and the analysis employs Fourier transform.
Recall that we are interested in two formulations for Heisenberg scaling in global estimation, respectively based on the
 asymptotics of the average error and the limiting distribution.
There exists a type of estimators that achieve the Heisenberg scaling in terms of limiting distribution but not  average error in the noiseless case (note that the asymptotic behavior of the average error is not known previously).
%Using results from \cite{Holevo,HolevoP},
The previous study \cite{IH09} discusses how Heisenberg scaling can be achieved in both senses in the noiseless case.  The noisy case has not been studied before.
%However, note that  the asymptotics of the average error of the above type of estimator.
%when the Heisenberg scaling is achieved in the second sense, but it cannot be achieved in the first sense.

This section aims to provide a detailed, self-contained discussion of global estimation.
As a preparation, we first discuss the noiseless case in Appendix \ref{S7-A}. Appendix \ref{S7-A-1} introduces the group covariant formulation for
global phase estimation.
 Appendix \ref{S7-A-2} reviews the existing results for 
global Heisenberg scaling in the noiseless case.
In Appendix \ref{S7-A-3}, as a new result,
we explicitly analyze the asymptotics of the average error of the estimators
that achieve the Heisenberg scaling in terms of limiting distribution but not average error.
Finally, in Appendix \ref{SEMA}, we consider the noisy case and show how Heisenberg scaling is achieved in both senses
when the noise parameter $p= \epsilon/n$.

\if0
As the preparation, Subsection \ref{S7-A} summarizes existing results of global estimation
for the noiseless case.
However, the existing studies \cite{} do not discuss 
the asymptotic average error when the Heisenberg scaling cannot be achieved.
Subsubsection \ref{S7-A-3} address it in this case.
Using the above preparation, Subsection \ref{SEMA} considers the noisy case.
That is, it clarifies how
the Heisenberg scaling is globally achieved when the noise parameter $p$ behaves as $\epsilon/n$.
\fi
%\red{Clarify structure of this section.}
\subsection{Noiseless case}\Label{S7-A}
\subsubsection{Formulation with covariant measurements}\Label{S7-A-1}
For the parallel scheme with $n$ uses of the unknown unitary,  
it is known that the inverse of the maximum SLD Fisher information cannot be  attained globally in general \cite{H,H2}. 
In our case with $p=0$, the minimum average error is strictly larger than the inverse of the maximum SLD Fisher information \cite{LP,BDM,H2,H2}. 
%the $n^2$ times of the inverse of the maximum SLD Fisher information is strictly smaller than 
%the $n^2$ times of the minimum average error \cite{LP,BDM,H2,H2}.
%As discussed in Section \ref{SLD1},
%the input to maximize the SLD Fisher information with the noiseless case
%can realize the Heisenberg scaling even with $p=\frac{\epsilon}{n}$.
%Hence, we can expect the similar phenomena for
%the input to minimize the average error with the noiseless case.
In these previous studies \cite{LP,BDM,H2}, 
the global phase estimation is achieved by converting 
the parallel operation $U_\theta^{\otimes n}$ on $n$-qubits to a phase operation 
$U_{\theta}^{(n)}:=
\sum_{m=0}^{n} e^{i (2m -n)\theta/2}|m\rangle \langle m|$ 
on an $(n+1)$-dimensional system ${\cal H}_n$
%We first revisit the conversion, where
%the $n$-qubits system can be converted to the $n+1$-dimensional system ${\cal H}_n$
spanned by $\{|0\rangle, |1\rangle,\cdots, |n\rangle\}$.

To see this conversion, we define 
the subspace ${\cal H}_{n,m}$ of ${\cal H}^{\otimes n}$ 
spanned by the vector $|0\rangle^{\otimes n-m}\otimes |1\rangle^{\otimes m}$ 
and its permutations with respect to the order of the tensor product. 
The initial state $|\Xi\rangle$ on ${\cal H}^{\otimes n}$ can be decomposed 
as $|\Xi\rangle = \sum_{m=0}^n a_m |\Xi_{n,n}\rangle$, where $|\Xi_{n,m}\rangle \in {\cal H}_{n,m}$
is a normalized vector and $a_m$ is a non-negative real number.
Depending on the states $\vec{\Xi}_n:= \{ |\Xi_{n,m}\rangle\}_{m=0}^n$,
we define the isometry $V_{\vec{\Xi}_n}$ from ${\cal H}_n$ to ${\cal H}^{\otimes n}$ as 
\begin{align}
V_{\vec{\Xi}_n}( |m\rangle):=|\Xi_{n,m}\rangle.
\end{align}
Then, $U_{\theta}^{(n)}$ is given as $
V_{\vec{\Xi}_n}^\dagger U_\theta^{\otimes n} 
V_{\vec{\Xi}_n}$.

So for the noiseless case,
the problem of estimating the unknown unitary $U_\theta^{\otimes n} $ on ${\cal H}^{\otimes n}$
with the initial state $|\Xi\rangle$ 
is converted to estimating the unknown unitary $U_\theta^{(n)} $ on ${\cal H}_n$
with the initial state $|\eta\rangle:=
\sum_{m=0}^n a_m |m\rangle$ \cite{Holevo,HolevoP}.
%The converted problem was formulated by Holevo \cite{Holevo,HolevoP}. 
Now suppose the error function $R(\theta,\hat{\theta}) $
depending on the true parameter $\theta$ and the estimate $\hat{\theta}$ has the property
\begin{align}
R ( \theta+\theta',\hat{\theta}+\theta')=R(\theta,\hat{\theta}) \label{Cov}
\end{align}
for any $\theta' $.
Then when the estimator is given by a POVM  $M$ on ${\cal H}_n$, % the true parameter is $\theta$,
the error is given by 
\begin{align}
R_{M,\theta}:=\int_0^{2 \pi} R(\theta,\hat{\theta})  
\langle \tilde{\Xi}|M(d\hat{\theta}) |\tilde{\Xi}\rangle .
\end{align}
Then the  average error is naturally defined by taking the average with respect to the uniform prior distribution of the true parameter $\theta$:
\begin{align}
R_{M}:=\frac{1}{2\pi} \int_{-\pi}^{\pi}  R_{M,\theta} d\theta.\label{prior}
\end{align}
If the estimator $M$ satisfies the covariance condition,
\begin{align}
U_c^{(n)} M((a,b)) (U_c^{(n)})^{\dagger}= M((a+c,b+c)),
\end{align}
it is called a covariant estimator.  Then
the error $R_{M,\theta}$ does not depend on the true parameter $\theta$, namely for any $\theta$,
\begin{align}
R_{M,\theta}= R_{M}.
\end{align}
Given an estimator $M$, we can define an associated covariant estimator $\bar{M}$ as
\begin{align}
\bar{M}( (a,b)) := \frac{1}{2\pi}
\int_{-\pi}^{\pi} U_c^{(n)} M((a-c,b-c)) (U_c^{(n)})^{\dagger} dc,
\end{align}
which satisfies the condition
\begin{align}
R_{M}= R_{\bar{M},\theta}= R_{\bar{M}}\label{C1}
\end{align}
for any $\theta$.
For example, consider the discrete estimator $M_{{\rm Dis}}[\{e^{i \theta_j}\}_j]$ 
\begin{align}
&M_{{\rm Dis}}[\{e^{i \theta_j}\}_j]
(\hat{\theta}=\frac{2 j \pi}{n})\nonumber \\
=&
U_{\frac{2 j \pi}{n}}^{(n)} |I,\{e^{i \theta_j}\}_j\rangle 
\langle I,\{e^{i \theta_j}\}_j| (U_{\frac{2 j \pi}{n}}^{(n)} )^\dagger
\end{align}
where $|I,\{e^{i \theta_j}\}_j\rangle:= \sum_{j=0}^n e^{i \theta_j}|j\rangle$.
The covariant estimator $\bar{M}_{{\rm Dis}}[\{e^{i \theta_j}\}_j]$ 
is equivalent to the continuous estimator $M_{\rm Con}[\{e^{i \theta_j}\}_j]$ defined as
\begin{align}
&M_{\rm Con}[\{e^{i \theta_j}\}_j](d\hat{\theta})\nonumber \\
:=& 
U_{\hat{\theta}}^{(n)} |I,\{e^{i \theta_j}\}_j\rangle 
\langle I,\{e^{i \theta_j}\}_j| (U_{\hat{\theta}}^{(n)} )^\dagger
\frac{d \hat{\theta}}{2\pi}.
\end{align}
Therefore, when we evaluate the average error $R_{M}$ of a given estimator $M$,
we can consider the error $R_{\bar{M},\theta}$ of the associated covariant estimator $\bar{M}$.
In particular, the minimization problem can be simplified as 
\begin{align}
\min_{M} R_{M}=\min_{M: {\rm covariant}} R_{M,\theta}.
=\min_{M: {\rm covariant}} R_{M}.
\end{align}
That is, it is sufficient to minimize over covariant estimators \cite{Holevo,HolevoP}.

\subsubsection{Heisenberg scaling of average error and limiting distribution}\Label{S7-A-2}
For channel estimation, it is known that
the local minimax error  can be  asymptotically achieved globally \cite{H}.
This fact shows that the asymptotic performance does not depend on the choice of the prior distribution
on the parameter space.
Therefore, 
without loss of the generality, we may assume the uniform prior distribution (Eq.~\eqref{prior})
in later asymptotic discussions. 

Now, we adopt the common error function 
%$R_{\sin^2}(\theta,\hat{\theta}) := \sin^2 (\theta-\hat{\theta})$.
$\tilde{R}(\theta,\hat{\theta}) := \sin^2 (\theta-\hat{\theta})$.
It is known that the minimum $\tilde{R}_{M}$ is given by \cite{Holevo,HolevoP,H2}
\begin{align}
&\min_{M} \tilde{R}_{M}\nonumber \\
= &\tilde{R}[|\eta\rangle]
:=
\int_{-\pi}^{\pi }
|\langle F_{\hat{\theta}-\theta}^\dagger I |  \eta\rangle |^2 
\sin^2 (\hat{\theta}-\theta)
\frac{d\hat{\theta}}{2\pi}\nonumber  \\
=&\frac{1}{2} \Big(\sum_{m=0}^{n}|a_m|^2
-\frac{1}{2}\sum_{m=0}^{n-1} \bar{a}_{m+1}a_m 
-\frac{1}{2}\sum_{m=1}^{n} \bar{a}_{m-1}a_m 
\Big)
\nonumber \\
=&\langle \vec{a}| T |\vec{a}\rangle,
\label{M1}
\end{align}
where
the matrix $T$ is defined as
$T_{k,m}=\frac{1}{2}\delta_{k,m}-\frac{1}{4}\delta_{k,m+1}-\frac{1}{4}\delta_{k,m-1}$,
which can be attained when the estimator $M$ is taken to be
the continuous estimator $M_{\rm Con}$.
Here,
the covariant measurement is given by
\begin{align}
M(d\hat{\theta})= \frac{d\hat{\theta}}{2\pi} 
U_{\hat{\theta}}^{(n)}\Big(\sum_{m=0}^n  e^{i \theta_m} |m\rangle\Big)
\Big(\sum_{m=0}^n \langle m| e^{-i \theta_m}\Big)
U_{-\hat{\theta}}^{(n)}.
\end{align}
 When $e^{i \theta_j}=\frac{a_j}{|a_j|}$, the above measurement is
 an optimal one that achieves Eq.~\eqref{M1}.
Due to Eq.~\eqref{C1}, the minimum is achieved by an estimator $M$
when the associated covariant estimator $\bar{M}$ is $M_{\rm Con}
[\{e^{i \theta_j}\}_j]$.
Hence, the discrete estimator $M_{{\rm Dis}}
[\{e^{i \theta_j}\}_j]$ achieves the minimum
because $\bar{M}_{{\rm Dis}}[\{e^{i \theta_j}\}_j]$ is
$M_{\rm Con} [\{e^{i \theta_j}\}_j]$.
%\rednew{[I don't quite understand the sentence. Maybe clarify.]}
 
For the estimation of the unknown unitary $U_{\theta}^{(n)}$, we have
\begin{align}
    \min_{|\eta\rangle}\tilde{R}[|\eta\rangle] = \frac{1}{2}  (1-\cos \frac{\pi}{n+1} ) = \frac{\pi^2}{4 (n+1)^2}+o(\frac{1}{n^2}),
\end{align}
which is achieved when 
$|\eta\rangle$ is chosen to be
\begin{align}
    |\eta_{\rm opt}\rangle:= \sum_{m=0}^n
C \sin \frac{\pi m}{n+1} |m\rangle,
\end{align}
 where $C$ is the normalization constant \cite{LP}\cite[Eq. (10)]{BDM}\cite[Theorem 7]{H3}. 
%Since 
%\begin{align}
%\frac{1}{2}  (1-\cos \frac{\pi}{n+1} ) = \frac{\pi^2}{4 (n+1)^2}+o(\frac{1}{n^2}),
%\end{align}
That is, $|\eta_{\rm opt}\rangle$ achieves the Heisenberg scaling in terms of average error. As a contrast, if the initial state is taken to be
$|\eta_{\rm uni}\rangle:= \sum_{m=0}^n
\frac{1}{\sqrt{n+1}} |m\rangle$,
the error is given by $R[|\eta_{\rm uni}\rangle] = \frac{1}{n+1}$, namely the Heisenberg scaling is not achieved.
 
In fact, the minimum coefficient $\frac{\pi^2}{4}$ in the Heisenberg scaling can be derived in another way as follows.
Consider a square-integrable $C^2$-differentiable function $f$ on $[0,1]$ with the $l^2$ norm $1$. 
We choose the coefficients $a_m:= \frac{1}{\sqrt{n+1}}f(\frac{m}{n})$ for the input state $|\eta\rangle$.
When the Dirichlet boundary condition $f(0)=f(1)=0$ is satisfied,
depending on $|\eta\rangle$,
the average $\tilde{R}[|\eta\rangle]$ is calculated as follows:
\begin{align}
\tilde{R}[|\eta\rangle]=& \frac{1}{4(n+1)}
\bigg(\sum_{m=1}^{n-1}\bar{f}(\frac{m}{n})
\Big(2{f}(\frac{m}{n})-{f}(\frac{m+1}{n})\nonumber \\
&-{f}(\frac{m-1}{n})\Big)+|{f}(0)|^2+|{f}(1)|^2
\bigg) \nonumber \\
=& \frac{-1}{4(n+1)n^2}\sum_{m=1}^{n-1}
\bar{f}(\frac{m}{n})
\frac{1}{n}\Big(\frac{f(\frac{m+1}{n})-f(\frac{m}{n})}{\frac{1}{n}} \nonumber \\
&-\frac{f(\frac{m}{n})-f(\frac{m-1}{n})}{\frac{1}{n}}
\Big) \nonumber \\
= & 
\frac{-1}{4(n+1)n^2}\sum_{m=1}^{n-1}
\bar{f}(\frac{m}{n})
\frac{d^2 f }{dx^2}(\frac{m}{n})+o(\frac{1}{n^2})
\nonumber  \\
= & 
\frac{-1}{4 n^2}
\int_{0}^{1}
\bar{f}(x)
\frac{d^2 f }{dx^2}(x) dx +o(\frac{1}{n^2})
\nonumber  \\
=&
\frac{1}{4 n^2}\langle f |P^2|f \rangle +o(\frac{1}{n^2}),\label{NKA}
\end{align}
where $P=-i \frac{d}{dx}$.
Under the Dirichlet boundary condition $f(0)=f(1)=0$,
the minimum eigenvalue of $P^2$ is $\pi^2$ and
the corresponding eigenfunction is $\sqrt{2} \sin (\pi x)$.
On the other hand,
when the Dirichlet boundary condition is not satisfied,
the first equation does not hold and  
$\langle f |P^2|f \rangle$ does not take a finite value. 
Therefore, we need a different analysis for this case (see Appendix~\ref{S7-A-3}).
%so that the Heisenberg scaling cannot be achieved in terms of average error. 
\if0
A typical example of the above scaling is the state $|\eta_{\rm uni}\rangle$, for which
$f$ takes the constant value $1$ on $[0,1]$.
Hence, we have
\begin{align}
R_{|\eta_{\rm uni}\rangle}
= \frac{1}{2 n}+O(\frac{1}{n^2}),\Label{XO1}
\end{align}
\fi

%when $f$ does not satisfy the Dirichlet boundary condition $f(0)=f(1)=0$.
%In a practical viewpoint, it is more important to see the  probability $P_{\theta,0,n} \{ |\hat{\theta}-\theta| > c\}$ with a certain threshold $c$ than the evaluation of the average error $R[|\eta\rangle]$.
We now go on to consider the practically more important figure of merit, the probability $P_{\theta,0,n} \{ |\hat{\theta}-\theta| > c\}$ where $c$ is a certain error threshold (we use $P_{\theta,p,n}$ to denote the distribution when
the true parameter, 
the dephasing probability,
and the number of applications,
are $\theta$, $p$, and $n$, respectively).
As mentioned in the main text, the case of constant $c$ corresponds to the large deviation analysis \cite{LD}, and here
we consider the case where
the limiting probability is a constant.
When Heisenberg scaling is achieved, 
the threshold $c$ has scaling $O(n^{-1})$.
Hence, when 
$P_{\theta,0,n} \{
\frac{a}{n} \le \hat\theta -\theta \le \frac{b}{n}\}$ converges to 
an non-trivial value, i.e., neither $0$ nor $1$,
we say that the limiting distribution
$ \lim_{n \to \infty}P_{\theta,0,n} \{
\frac{a}{n} \le \hat\theta -\theta \le \frac{b}{n} \}$ achieves the Heisenberg scaling.
In the following, we show that 
the Heisenberg scaling in terms of limiting distribution can be achieved  without
the Dirichlet boundary condition. 
For this discussion, we denote the Fourier transform of $f$ by ${\cal F}f$, which is defined as
\begin{align}
{\cal F}f(t):= \frac{1}{\sqrt{2\pi}}\int_{-\infty}^{\infty}e^{i x t}f(x) dx.
\end{align}
Then, using $t=n(\hat{\theta}-\theta)$, we have
\begin{align}
& 
P_{\theta,0,n} \Big\{
\frac{a}{n} \le \hat\theta -\theta \le \frac{b}{n} \Big\}\nonumber \\
=&
\int_{\frac{a}{n}}^{\frac{b}{n} }
\Big|\sum_{m=0}^n e^{i m (\hat{\theta}-\theta) -i n (\hat{\theta}-\theta)/2} a_m\Big|^2
\frac{d\hat{\theta}}{2\pi} \nonumber \\
=&\frac{1}{n+1}\int_{\frac{a}{n}}^{\frac{b}{n} }
\Big|\sum_{m=0}^n e^{i m (\hat{\theta}-\theta)} f(\frac{m}{n})\Big|^2
\frac{d\hat{\theta}}{2\pi} \nonumber \\
=&(n+1)\int_{\frac{a}{n}}^{\frac{b}{n} }
\Big|\frac{1}{n+1}\sum_{m=0}^n e^{i \frac{m}{n} n(\hat{\theta}-\theta)} f(\frac{m}{n})\Big|^2
\frac{d\hat{\theta}}{2\pi}\nonumber \\
\cong &\frac{(n+1)}{n}\int_{a}^{b}
|{\cal F}f(t)|^2 dt 
\cong \int_{a}^{b}
|{\cal F}f(t)|^2 dt .\Label{ZXO}
\end{align}
That is, we can say that 
the Heisenberg scaling  in terms of limiting distribution can be achieved \cite{IH09}.

Now return to the original problem of estimating the unknown unitary $U_{\theta}^{\otimes n}$
on the system ${\cal H}^{\otimes n}$.
The choice of $f$, or $\{a_i\}$, corresponds to the choice of the initial pure state.
The above analysis shows that the minimum average error as given by the error function
$\tilde{R}$ is given by $\frac{1}{2}  (1-\cos \frac{\pi}{n+1} )
= \frac{\pi^2}{4 (n+1)^2}+o(\frac{1}{n^2})$ and thus achieves Heisenberg scaling.
Furthermore, it can be attained by initial state 
$V_{\vec{\Xi}_n}^\dagger |\eta_{\rm opt}\rangle$
and the POVM
\begin{align}
M(d\hat{\theta})= 
V_{\vec{\Xi}_n}^\dagger
M_{\rm Con}(d\hat{\theta})V_{\vec{\Xi}_n}.
\end{align}
%Similarly, we can consider the initial state 
%$V_{\vec{\Xi}_n}^\dagger |\eta_{\rm uni}\rangle$.

\subsubsection{Asymptotic analysis of average error without the Dirichlet boundary condition}\Label{S7-A-3}
The above discussion shows that the Heisenberg scaling  in terms of 
average error  is not achieved
when the square-integrable $C^1$-differentiable function $f$ on $[0,1]$ does not satisfy
the Dirichlet boundary condition $f(0)=f(1)=0$.
However,  the asymptotic behavior of the average error $R[|\eta\rangle]$
in this case has not been explicitly analyzed.
We now do so.
Using $t=n(\hat{\theta}-\theta)$, we have
\begin{align}
&\tilde{R}[|\eta\rangle]\nonumber \\=&
\int_{-\pi}^{\pi }
\left|\sum_{m=0}^n e^{i m (\hat{\theta}-\theta) -i n (\hat{\theta}-\theta)/2} a_m\right|^2
\sin^2(\hat{\theta}-\theta)
\frac{d\hat{\theta}}{2\pi} \nonumber \\
=&(n+1)\int_{-\pi}^{\pi }
\left|\frac{1}{n+1}\sum_{m=0}^n e^{i \frac{m}{n} n(\hat{\theta}-\theta)} f(\frac{m}{n})\right|^2
\nonumber \\
&\quad\cdot\sin^2( \frac{n(\hat{\theta}-\theta)}{n})
\frac{d\hat{\theta}}{2\pi}\nonumber \\
\cong &\frac{n+1}{n}\int_{-\pi n}^{\pi n }
\sin^2( \frac{t}{n})
|{\cal F}f(t)|^2 dt .
\end{align}
When $f$ satisfies the Dirichlet boundary condition $f(0)=f(1)=0$,
the integral $\int_{-\infty}^{\infty }
t^2|{\cal F}f(t)|^2 dt$ converges.
Hence, we have
\begin{align}
\lim_{n \to \infty} n^2 \tilde{R}[|\eta\rangle]
=&
\lim_{n \to \infty}
\int_{-\pi n}^{\pi n }
n^2 \sin^2( \frac{t}{n})
|{\cal F}f(t)|^2 dt \nonumber \\
=&\int_{-\infty}^{\infty }
t^2|{\cal F}f(t)|^2 dt.
\end{align}
To consider a function $f$ that does not satisfy the Dirichlet boundary condition $f(0)=f(1)=0$,
we define 
\begin{align}
A_+(f)&:=\lim_{R_1 \to \infty}\lim_{R_2 \to \infty}\frac{1}{R_2}\int_{R_1}^{R_1+R_2}
t^2|{\cal F}f(t)|^2 dt \\
A_-(f)&:=\lim_{R_1 \to -\infty}\lim_{R_2 \to -\infty}\frac{1}{R_2}\int_{R_1}^{R_1+R_2}
t^2|{\cal F}f(t)|^2 dt.
\end{align}
We then have
\begin{align}
\tilde{R}[|\eta\rangle]=&\frac{n+1}{n}\int_{-\pi n}^{\pi n }
\sin^2( \frac{t}{n})
|{\cal F}f(t)|^2 dt \nonumber \\
=&\frac{n+1}{n^2}\int_{-\pi}^{\pi}
\frac{\sin^2 s}{s^2}
|{\cal F}f(\frac{s}{n})|^2(\frac{s}{n})^2 ds \nonumber \\
\cong & \frac{1}{n}
\Big(A_+(f) \int_{0}^{\pi}\frac{\sin^2 s}{s^2}  ds 
+A_-(f)\int_{-\pi}^{0}\frac{\sin^2 s}{s^2} ds \Big)
\nonumber \\=& 
\frac{1}{n}(A_+(f)+A_-(f))\Si(2 \pi),
\end{align}
where $\Si(x):= \int_0^x \frac{\sin t}{t}dt$, and 
$\Si(2 \pi) \cong 1.41815$.
That is, this type of input states cannot achieve Heisenberg scaling in terms of average error.
However, we shall see that it achieves Heisenberg scaling in terms of limiting distribution.

A representative example of an input state that does not satisfy the Dirichlet boundary condition is the state $|\eta_{\rm uni}\rangle$, for which
$f$ is the constant function on $[0,1]$.
Now since ${\cal F}f(t)= \frac{e^{it/2} \sin \frac{t}{2} }{\sqrt{2\pi} t} $, we have
$t^2 |{\cal F}f(t)|^2= \frac{\sin^2 \frac{t}{2}}{2\pi}$, which implies $A_+(f)=A_-(f)=\frac{1}{4\pi}$.
Hence, we have
\begin{align}
\tilde{R}[|\eta_{\rm uni}\rangle]
\cong  \frac{\Si(2\pi)}{2 \pi n},\Label{XO1}
\end{align}
which has a scaling  different from the Heisenberg scaling.
%That is, this input state $|\eta_{\rm uni}\rangle$ cannot achieve Heisenberg scaling in terms of average error.
%However, it achieves Heisenberg scaling in terms of limiting distribution.

\subsection{Noisy case}\Label{SEMA}
We now extend the analysis to the noisy scenario. 
In the following parts, we employ the standard notation for probability theory, in which, 
upper case letters denote random variables
and the corresponding lower case letters denote their realizations.
%Suppose the noise parameter $p = {\epsilon}/{n}$.
Consider the tensor-product vector space $(\mathbb{C}^2)^{\otimes n}$, where $\mathbb{C}^2$ is spanned by 
the normalized orthogonal basis $\{|0\rangle,|1\rangle\}$.
The tensor-product vector space $(\mathbb{C}^2)^{\otimes n}$ can be decomposed as
\begin{align*}
(\mathbb{C}^2)^{\otimes n}=\bigoplus_{j=0 \hbox{ or } 1/2 }^{n/2} {\cal U}_{j} \otimes {\cal V}_j,
\end{align*}
where ${\cal U}_{j}$ denotes the spin $j$ representation of SU(2), and
${\cal V}_{j}$ denotes the irreducible representation of $n$-th permutation group with respect to the order of tensor product.
${\cal V}_{j}$ is spanned by $\{|\ell, j\rangle\}_{\ell=-j,-j+1, \cdots, j-1,j}$ and we denote the projection to ${\cal U}_{j} \otimes {\cal V}_j$ as $P_{n,j}$.

Now, we consider the phase estimation problem under our noisy model $\Lambda_{\theta, p}$.
In order to make the noise effect symmetric with respect to permutation,
we let $|\Xi_{n,m}\rangle={n \choose m}^{-1/2} (|1\rangle^{\otimes m}\otimes |0\rangle^{\otimes n-m}+ PT)$,
where $PT$ represents the permuted terms of $|1\rangle^{\otimes m}\otimes |0\rangle^{\otimes n-m}$.
An initial state which is permutation invariant can then be written as $|\Phi\rangle:=\sum_{m=0}^n a_m |\Xi_{n,m}\rangle$.
For simplicity, we let the unitary $U_{\theta}^{\otimes n}$ act on the $n$ qubits  after the phase damping channel. 
%acts on $n$ qubits, the unitary $U_{\theta}^{\otimes n}$ acts on $n$ qubits.
Let $X^n=(X_1, \cdots, X_n) \in \bF_2^n$ be the variables that describe the effects of the noise: When the dephasing, i.e., 
the two-valued measurement $\{|0\rangle \langle 0|,  |1\rangle \langle 1|\}$ is applied on the $i$-th qubit,  $X_i=1$, otherwise $X_i=0$.
Let $|\vec{x}|$ be the number of 
components in the vector $\vec{x}$ that are $1$.
For example, when $\vec{x}=(\underbrace{1,\cdots,1}_{k},\underbrace{0,\cdots,0}_{n-k})$, we have $|\vec{x}|=k$.

In the following, we consider the above type of $\vec{x}$.
Since the dephasing acts on the first $k$ qubits,
the PVM $\{P_{\vec{z}|\vec{x}}\}_{\vec{z}\in \bF_2^k}$ is applied, where
the projection $P_{\vec{z}|\vec{x}}$ is defined as 
$|\vec{z}\rangle \langle \vec{z}| \otimes I^{\otimes n-k}$.
For a general $\vec{x}\in \bF_2^n$, 
the projection $P_{\vec{z}|\vec{x}}$ is defined 
by applying the permutation to 
$|\vec{z}\rangle \langle \vec{z}| \otimes I^{\otimes n-k}$.
Therefore, 
when $\vec{X}=\vec{x}$, the resultant state is
\begin{align}
\sum_{\vec{z} \in \bF_2^{|\vec{x}|}} 
P_{\vec{z}|\vec{x}}
|\Phi\rangle \langle \Phi|P_{\vec{z}|\vec{x}}.
\end{align}
Noting that the probability that $\vec{X}=\vec{x}$ is $p^k (1-p)^{n-k}$,
the averaged state is
\begin{align}
\sum_{\vec{x} \in \bF_2^{n}} 
p^{|\vec{x}|}(1-p)^{n-|\vec{x}|}
\sum_{\vec{z} \in \bF_2^{|\vec{x}|}} 
P_{\vec{z}|\vec{x}}|\Phi\rangle \langle \Phi|P_{\vec{z}|\vec{x}}.\label{DD1}
\end{align}
Since this state is invariant with respect to permutation, it has the form
\begin{align}
\bigoplus_{j=0 \hbox{ or } 1/2 }^{n/2} p_j \rho_{j} \otimes %\rho_{mix,{\cal V}_j},
I_{{\cal V}_j},
\label{DD2}
\end{align}
where $\rho_j$ is some state on ${\cal U}_j$
and $I_{{\cal V}_j}$ is the completely mixed state on ${\cal V}_j$.
Since the unitary $U_\theta^{\otimes n}$ acts only on ${\cal U}_j$,
the optimization of measurement is reduced to the phase estimation on each system 
${\cal U}_j$.
Since the basis $\{|\ell, j\rangle\}_{\ell=-j,-j+1, \cdots, j-1,j}$ of ${\cal V}_{j}$ are eigenvectors of 
the unitary $U_\theta^{\otimes n}$,
we apply the following measurement on ${\cal U}_{j}\otimes {\cal V}_{j}$,
\begin{align}
&M(d\hat{\theta})\nonumber \\
=& \frac{d\hat{\theta}}{2\pi} 
U_{\hat{\theta}}^{(n)}\Big(\sum_{\ell=-j}^j e^{i \theta_\ell} |\ell,j\rangle\Big)
\Big(\sum_{\ell=-j}^j \langle \ell,j| e^{-i \theta_\ell}\Big)
U_{-\hat{\theta}}^{(n)} \otimes I_{{\cal V}_{j}}.
\label{MFI}
\end{align}
Since the state Eq.~\eqref{DD2} can be decomposed into Eq.~\eqref{DD1},
we consider the optimal coefficient $e^{i \theta_\ell}$ for each component of the decomposition.

Without loss of generality, we consider the case when 
$\vec{x}=(\underbrace{1,\cdots,1}_{k},\underbrace{0,\cdots,0}_{n-k})$
and $\vec{z}=(\underbrace{1,\cdots,1}_{\ell},\underbrace{0,\cdots,0}_{k-\ell})$.
Then,
\begin{align}
P_{\vec{z}|\vec{x}}|\Phi\rangle 
=&\sum_{m=0}^n a_m P_{\vec{z}|\vec{x}} |\Xi_{n,m}\rangle
\nonumber \\=&\sum_{m=0}^n a_m 
b_{n,m|k, \ell}  |1\rangle^{\otimes \ell} |0\rangle^{\otimes (k-\ell)} |\Xi_{n-k,m-\ell}\rangle ,
\end{align}
where
%\begin{widetext}
\begin{align}
&b_{n,m|k, \ell} \nonumber \\
:=&
\sqrt{
\frac{m (m-1)\cdots (m-\ell+1)  }{
n (n-1)\cdots (n-k+1)} } \nonumber \\
& \cdot \sqrt{(n-m) (n-m-1)\cdots (n-m-(k-\ell)+1) }.
\Label{XI1}
\end{align}
%\end{widetext}
We choose a non-negative coefficient $c_{n,j,m,k,\ell}$ and $| \Upsilon(n,j,k,\ell) \rangle
\in {\cal V}_{j}$ as
\begin{align}
 &P_{n,j} |1\rangle^{\otimes \ell} |0\rangle^{\otimes (k-\ell)} |\Xi_{n-k,m-\ell}\rangle\nonumber  \\
=&c_{n,j,m,k,\ell} |m,j\rangle\otimes | \Upsilon(n,j,k,\ell) \rangle
\end{align}
Here, the vector $| \Upsilon(n,j,k,\ell) \rangle\in {\cal V}_{j}$ is a normalized vector, and 
does not depend on $m$ because 
the operators $J_+ $ and $J_-$ commute with the projection $ P_{n,j}$.
The non-negativity of $c_{n,j,m,k,\ell}$ follows from that it is given as a summand of non-negative coefficients
based on a combinatorial discussion.
We have $c_{n,j,m,k,\ell}=0$ if and only if 
$\frac{n}{2}-j > \min(k,m,n-m) $.
Therefore,
\begin{align}
P_{\vec{z}|\vec{x}}|\Phi\rangle 
=
\bigoplus_{j=0 \hbox{ or } 1/2 }^{n/2}
\sum_{m=0}^n a_m d_{n,j,m,k,\ell} |m,j\rangle\otimes | \Upsilon(n,j,k,\ell) \rangle.
\end{align}
where
$d_{n,j,m,k,\ell}:=b_{n,m|k, \ell}
 c_{n,j,m,k,\ell}$.
Since $a_m d_{n,j,m,k,\ell} \ge 0$,
the optimal coefficient $e^{i \theta_\ell}$ is $1$.
This optimal choice does not depend on the component in the decomposition given in Eq.~\eqref{DD1}.

When the initial state before the application of the unknown phase is
$\frac{1}{\sqrt{\langle \Phi|P_{\vec{z}|\vec{x}}|\Phi\rangle }}
P_{\vec{z}|\vec{x}}|\Phi\rangle$
and the measurement given by Eq.~\eqref{MFI} is applied,
due to Eq.~\eqref{M1}, 
the estimation error of this estimator is
\begin{align}
&\frac{1}{{\langle \Phi|P_{\vec{z}|\vec{x}}|\Phi\rangle }}
\sum_{j=0 \hbox{ or } 1/2 }^{n/2}
\frac{1}{2} \Big(\sum_{m=0}^{n}a_m^2 d_{n,j,m,k,\ell} ^2 \nonumber \\
&\quad -\frac{1}{2}\sum_{m=0}^{n-1} a_{m+1} d_{n,j,m+1,k,\ell} a_m d_{n,j,m,k,\ell} \nonumber \\
&\quad
-\frac{1}{2}\sum_{m=1}^{n} a_{m-1} d_{n,j,m-1,k,\ell} a_m d_{n,j,m,k,\ell}
\Big).\label{MFU}
\end{align}
Hence, when $\vec{X}=\vec{x}$,  
the initial state before the application of the unknown phase is
$\sum_{\vec{z}\in \{0,1\}^k} P_{\vec{z}|\vec{x}}|\Phi\rangle\langle \Phi|P_{\vec{z}|\vec{x}}$.
In this case, when the measurement given by Eq.~\eqref{MFI} is applied,
by taking the average with respect to $ \vec{z}$ in Eq.~\eqref{MFU}, 
the estimation error of this estimator is
\begin{align}
&\sum_{j=0 \hbox{ or } 1/2 }^{n/2}
\sum_{\ell=0}^k {k \choose \ell}
\frac{1}{2} \Big(\sum_{m=0}^{n}a_m^2 d_{n,j,m,k,\ell} ^2\nonumber \\
&\quad
-\frac{1}{2}\sum_{m=0}^{n-1} a_{m+1} d_{n,j,m+1,k,\ell} a_m d_{n,j,m,k,\ell} 
\nonumber \\
&\quad-\frac{1}{2}\sum_{m=1}^{n} a_{m-1} d_{n,j,m-1,k,\ell} a_m d_{n,j,m,k,\ell}
\Big).
\end{align}
Finally, taking the average for $k$ under the distribution
${n \choose k}p^k(1-p)^{n-k}$,
the estimation error is 
\begin{align}
v_n:=&
\sum_{j=0 \hbox{ or } 1/2 }^{n/2}
\sum_{k=0}^n {n \choose k}p^k(1-p)^{n-k}\nonumber \\
&\cdot
\sum_{\ell=0}^k {k \choose \ell}
\frac{1}{2} \Big(\sum_{m=0}^{n}a_m^2 d_{n,j,m,k,\ell} ^2\nonumber \\
&\quad
-\frac{1}{2}\sum_{m=0}^{n-1} a_{m+1} d_{n,j,m+1,k,\ell} a_m d_{n,j,m,k,\ell} \nonumber \\
&\quad
-\frac{1}{2}\sum_{m=1}^{n} a_{m-1} d_{n,j,m-1,k,\ell} a_m d_{n,j,m,k,\ell}
\Big).
\end{align}
We need to minimize the above value by choosing $a_m$.

In particular, we are interested in the case of $p=\epsilon/n$.
Then the binomial distribution converges to the Poisson distribution as
${n \choose k} p^k (1-p)^{n-k} \to e^{-\epsilon}\frac{\epsilon^k}{k !}$ as $n \to \infty$.
Also, we have the condition $j\ge n/2-k$. 
Let the coefficients 
$a_m:= \frac{1}{\sqrt{n+1}}f(\frac{m}{n})$,
where
$f$ is a square-integrable smooth function on $[0,1]$ with $l^2$ norm $1$. 
We prove the following lemma.
\begin{lemma}\Label{Le1}
When $k,\ell$ is fixed and 
$a_m= \frac{1}{\sqrt{n+1}}f(\frac{m}{n})$,
$a_m d_{n, \frac{n}{2}-t, m,k,\ell}$
is approximately $\frac{1}{\sqrt{n+1}}(\sqrt{T_{t,k,\ell}} f) (\frac{m}{n})$, where
the operator $T_{t,k,\ell}$ is defined as
\begin{align}
&T_{t,k,\ell}\nonumber \\
:=& \sum_{u=\max(0,t-k+\ell)}^{\min(t,l)} 
{k-\ell \choose t-u} {\ell \choose u} Q^{2(t-u)+\ell}(I-Q)^{2u+k-\ell} .
\end{align}
\end{lemma}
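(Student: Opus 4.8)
The plan is to compute the leading‑order asymptotics of $d_{n,\frac n2-t,m,k,\ell}$ for $m$ of order $n$ and recognize the resulting $m$‑dependence as multiplication by $\sqrt{T_{t,k,\ell}}$ evaluated at $m/n$. Since $d_{n,j,m,k,\ell}=b_{n,m|k,\ell}\,c_{n,j,m,k,\ell}$, I would treat the two factors separately, working throughout in the bulk $\delta n\le m\le(1-\delta)n$. For $b$, the definition \eqref{XI1} gives $b_{n,m|k,\ell}^2=\binom{n-k}{m-\ell}\big/\binom nm=\frac{m^{\underline\ell}\,(n-m)^{\underline{k-\ell}}}{n^{\underline k}}$ in terms of falling factorials; since $k$ and $\ell$ are fixed, each falling factorial differs from the corresponding pure power by a factor $1+O(1/n)$, so $b_{n,m|k,\ell}^2=(Q^\ell(I-Q)^{k-\ell})(\tfrac mn)\,\bigl(1+O(1/n)\bigr)$ uniformly in the bulk.

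The real work is the factor $c$. By definition $c_{n,j,m,k,\ell}^2=\bigl\|P_{n,j}\bigl(|1\rangle^{\otimes\ell}|0\rangle^{\otimes(k-\ell)}\otimes|\Xi_{n-k,m-\ell}\rangle\bigr)\bigr\|^2$, and $|\Xi_{n-k,m-\ell}\rangle$ is the unique weight‑$(\tfrac n2-m)$ vector of the top spin $s_0=\tfrac{n-k}2$. I would couple the $k$ extra qubits onto this vector one at a time using $s\otimes\tfrac12=(s+\tfrac12)\oplus(s-\tfrac12)$. Having fixed an order of the $k$ qubits, a string of raise/lower choices labels an orthogonal copy of the terminal irrep, so $c_{n,\frac n2-t,m,k,\ell}^2$ is the sum, over the strings in $\{+,-\}^k$ with exactly $t$ minus signs, of the product of the squared Clebsch--Gordan coefficients along the chain. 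Because $k$ is fixed while $s$ stays $\tfrac n2+O(1)$ and the weight stays $n(\tfrac12-\tfrac mn)+O(1)$ along the whole length‑$k$ chain, every such squared coefficient converges uniformly in the bulk to $\tfrac mn$ or $1-\tfrac mn$ — a copy of $|0\rangle$ giving $1-\tfrac mn$ on a raising step and $\tfrac mn$ on a lowering step, a copy of $|1\rangle$ the reverse — independently of where in the chain it occurs. Summing over the $\binom\ell u\binom{k-\ell}{t-u}$ strings in which $u$ of the $\ell$ copies of $|1\rangle$ and $t-u$ of the $k-\ell$ copies of $|0\rangle$ carry the lowering steps then produces a polynomial in $m/n$; multiplying by $b_{n,m|k,\ell}^2$ collects the binomial coefficients together with the powers of $Q$ and $I-Q$ into the operator $T_{t,k,\ell}$ evaluated at $\tfrac mn$. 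Taking square roots gives $d_{n,\frac n2-t,m,k,\ell}=\sqrt{T_{t,k,\ell}(\tfrac mn)}\,\bigl(1+O(1/n)\bigr)$ and hence $a_m d_{n,\frac n2-t,m,k,\ell}=\tfrac1{\sqrt{n+1}}(\sqrt{T_{t,k,\ell}}f)(\tfrac mn)\bigl(1+O(1/n)\bigr)$, which is the claim of the lemma.

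The one delicate point — and I expect it to be the main obstacle — is uniformity at the two ends $m\approx0$ and $m\approx n$: there $b_{n,m|k,\ell}$ and the Clebsch--Gordan coefficients degenerate and the chain‑of‑intermediate‑spins picture breaks down once $j$ is of order $\tfrac n2-m$, so the pointwise estimate above holds only away from the ends. The way to handle this is to note that the lemma is used downstream only through an $\ell^2$ estimate: one shows that $\bigl(a_m d_{n,\frac n2-t,m,k,\ell}\bigr)_{m=0}^n$ and $\tfrac1{\sqrt{n+1}}\bigl((\sqrt{T_{t,k,\ell}}f)(\tfrac mn)\bigr)_{m=0}^n$ are $\ell^2$‑close, using square‑integrability of $f$ so that the two boundary strips carry vanishing $\ell^2$ weight, together with the crude bound $0\le d_{n,j,m,k,\ell}\le b_{n,m|k,\ell}\le1$ to dominate the discrepancy on those strips. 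This is exactly the form in which $\sqrt{T_{t,k,\ell}}f$ then replaces $f$ in the noiseless average‑error and limiting‑distribution formulas to yield the $p=\epsilon/n$ expressions.
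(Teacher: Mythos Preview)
Your approach is sound and takes a genuinely different route from the paper's proof. The paper does not compute $c_{n,\frac{n}{2}-t,m,k,\ell}$ at all; it simply identifies $d_{n,\frac{n}{2}-t,m,k,\ell}^{\,2}$ with a probability $p(t\mid n,m,k,\ell)$ defined in a companion paper (Hayashi--Hora--Yanagida, arXiv:2104.12635) and then invokes Theorem~5.1.1 of that reference as a black box for the limiting polynomial. Your argument is self-contained: you couple the $k$ distinguished qubits onto the spin-$\tfrac{n-k}{2}$ irrep one at a time, use the orthogonality of the resulting path basis to write $c^2$ as a sum over $\{+,-\}^k$ paths of products of squared Clebsch--Gordan coefficients, and then use $s=\tfrac{n}{2}+O(1)$, $\mu=\tfrac{n}{2}-m+O(1)$ to replace every factor by $q=\tfrac{m}{n}$ or $1-q$. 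This is precisely the representation-theoretic mechanism that the cited theorem packages, so you gain a fully elementary argument at the price of a slightly longer computation; the paper gains brevity at the price of an external dependence. Your treatment of the boundary strips via an $\ell^2$ estimate is also more careful than the paper, which writes ``$\cong$'' throughout without discussing uniformity.

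One caution: do not simply assert the final ``collecting'' step. If you carry your counting through you obtain
\[
d_{n,\frac{n}{2}-t,m,k,\ell}^{\,2}\;\approx\;\sum_{u}\binom{\ell}{u}\binom{k-\ell}{t-u}\,q^{\,2\ell+t-2u}(1-q)^{\,2(k-\ell)-t+2u},
\]
which has total degree $2k$ independent of $t$ and satisfies the required normalization $\sum_t d^2\approx q^{\ell}(1-q)^{k-\ell}=b^2$. The exponents displayed in the paper's $T_{t,k,\ell}$, namely $2(t-u)+\ell$ and $2u+k-\ell$, give total degree $2t+k$ and fail this normalization already for $k=1$; this looks like a typo in the paper rather than an error in your method. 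For the downstream Heisenberg-scaling conclusion only the fact that $d^2$ limits to a fixed polynomial multiplication operator in $m/n$ matters, and your argument delivers that cleanly.
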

\begin{proof}
We fix $k,l$ and take the limit $n\to \infty$.
Due to Eq.~\eqref{XI1}, we find that
\begin{align}
b_{n,m|k,\ell}\cong 
\sqrt{ (\frac{m}{n})^\ell (1-\frac{m}{n})^{k-\ell} }.
\end{align}
Hence, we discuss $d_{n,\frac{n}{2}-t,m,k,\ell}$.
For this aim, we employ Theorem 5.1.1 of \cite{Yanagida}.
Due to \cite[(2.1.4)]{Yanagida}, the probability $p(x|n,m,k,\ell)$ defined in \cite{Yanagida}
equals 
$d_{n,\frac{n}{2}-x,m,k,\ell}^2$.
Hence, Theorem 5.1.1 of \cite{Yanagida} guarantees that
\begin{align}
& d_{n,\frac{n}{2}-t,m,k,\ell}^2\nonumber  \\
\cong & 
\sum_{u=\max(0,t-k+\ell)}^{\min(t,l)} 
{k-\ell \choose t-u} {\ell \choose u} (\frac{m}{n})^{2(t-u)+\ell}(1-\frac{m}{n})^{2u+k-\ell} .
\end{align}
Since $d_{n,\frac{n}{2}-t,m,k,\ell}\ge 0 $ and $a_m= \frac{1}{\sqrt{n+1}}f(\frac{m}{n})$, we have
\begin{align}
a_m d_{n, \frac{n}{2}-t, m,k,\ell}
\cong \frac{1}{\sqrt{n+1}} (\sqrt{T_{t,k,\ell}} f) (\frac{m}{n}).
\end{align}
\end{proof}

Therefore, $v_n$ is approximated as the following:
\begin{align}
v_n \cong 
\sum_{k=0}^\infty
e^{-\epsilon}\frac{\epsilon^k}{k !}
\sum_{t=0}^{k}
\sum_{\ell=0}^k {k \choose \ell} 
\frac{1}{4n^2}\langle f| \sqrt{T_{t,k,\ell}}P^2\sqrt{T_{t,k,\ell}}|f\rangle.
\end{align}
When $f$ satisfies the Dirichlet boundary condition $f(0)=f(1)=0$,
$\sqrt{T_{t,k,\ell}} f$ also satisfies the Dirichlet boundary condition
so that the average error $v_n$ achieves the Heisenberg scaling.

Next, we consider the limiting distribution.
For $p=\epsilon/n$, using the same discussion with Eq.~\eqref{ZXO}, we have
\begin{align}
& 
P_{\theta,\frac{\epsilon}{n},n} \Big\{
\frac{a}{n} \le \hat\theta -\theta \le \frac{b}{n} \Big\}\nonumber \\
\cong &
\sum_{k=0}^\infty
e^{-\epsilon}\frac{\epsilon^k}{k !}
\sum_{t=0}^{k}
\sum_{\ell=0}^k {k \choose \ell} 
\int_{a}^{b}
|{\cal F}(\sqrt{T_{t,k,\ell}}f)(t)|^2 dt .\Label{ZXO2}
\end{align}
That is, we conclude that 
the Heisenberg scaling  in terms of limiting distribution can be achieved
even when $f$ does not satisfy the Dirichlet boundary condition.

%\begin{theorem}\Label{TH1}
%When the error parameter $p$ behaves as $O(1/n)$, the Heisenberg scaling can be achieved  in terms of average error as well as in terms of limiting distribution.
%\end{theorem}

Therefore, we arrive at the following main conclusion.
\thmglobal*

\begin{comment}
\subsubsection{Proof of Lemma \ref{Le1}}
We fix $k,l$ and take the limit $n\to \infty$.
Due to Eq.~\eqref{XI1}, we find that
\begin{align}
b_{n,m|k,\ell}\cong 
\sqrt{ (\frac{m}{n})^\ell (1-\frac{m}{n})^{k-\ell} }.
\end{align}
Hence, we discuss $d_{n,\frac{n}{2}-t,m,k,\ell}$.
For this aim, we employ Theorem 5.1.1 of \cite{Yanagida}.
Due to \cite[(2.1.4)]{Yanagida}, the probability $p(x|n,m,k,\ell)$ defined in \cite{Yanagida}
equals 
$d_{n,\frac{n}{2}-x,m,k,\ell}^2$.
Hence, Theorem 5.1.1 of \cite{Yanagida} guarantees that
\begin{align}
& d_{n,\frac{n}{2}-t,m,k,\ell}^2\nonumber  \\
\cong & 
\sum_{u=\max(0,t-k+\ell)}^{\min(t,l)} 
{k-\ell \choose t-u} {\ell \choose u} (\frac{m}{n})^{2(t-u)+\ell}(1-\frac{m}{n})^{2u+k-\ell} .
\end{align}
Since $d_{n,\frac{n}{2}-t,m,k,\ell}\ge 0 $ and $a_m= \frac{1}{\sqrt{n+1}}f(\frac{m}{n})$, we have
\begin{align}
a_m d_{n, \frac{n}{2}-t, m,k,\ell}
\cong \frac{1}{\sqrt{n+1}} (\sqrt{T_{t,k,\ell}} f) (\frac{m}{n}).
\end{align}
\end{comment}

\section{Practical global estimation with one-qubit memory}\label{Sec:seq}
We have shown that the Heisenberg scaling 
can be achieved when the noise parameter $p$ behaves as $\epsilon/n$.
However, the method given in Appendix \ref{SEMA} requires a  complicated process so that
it may not be regarded practically implementable.
To address this problem, similar to Appendix \ref{NNP},
we propose a simple adaptive method by modifying the 
adaptive discrete phase estimation method by the paper \cite{Cleve}, which requires only an one-qubit memory.
It is known that the above discrete method perfectly estimates the unknown phase parameter when 
it is limited to the specific discrete subset, and the estimation error of this method 
when the unknown phase parameter does not belong to the discrete subset is  discussed in \cite{Cleve}.
However, it did not derive the limiting distribution nor the asymptotic behavior of the average error
even in the noiseless case
when the unknown phase parameter is subject to the uniform distribution on the continuous set.
Appendix \ref{S8-2} clarifies the above two issues in the noiseless case
after Appendix \ref{S8-1} introduces the above discrete method with one-qubit memory as our practical phase estimator.
Appendix \ref{S8-3} analyzes the noisy case by
modifying the analysis of the above noiseless case, showing that it achieves the Heisenberg scaling in terms of limiting distribution.

\subsection{Construction of the estimator}\Label{S8-1}
According to \cite{Cleve}, we construct the following adaptive estimator on an one-qubit memory that works globally
when $n=2^{N+1}-1$ applications are allowed. (This protocol is already presented in the main text;  We repeat it here for readers' convenience.)

\begin{protocolc}{1}\Label{P1}
In the first step, 
we prepare the input state $|+\rangle:=
\frac{1}{\sqrt{2}}(|0\rangle+|1\rangle)$,
and apply the unknown channel $\Lambda_{\theta,p}$ for $2^N$ times.
Then, we measure the final state in the basis
$\{|+\rangle, |-\rangle \}$ and set
$A_1=0, 1$ upon getting $|+\rangle, |-\rangle$ respectively.

In the second step, we prepare the input state $|+\rangle:=\frac{1}{\sqrt{2}}(|0\rangle+|1\rangle)$, and apply $\Lambda_{\theta,p}$ for $2^{N-1}$ times. Then, we apply $U_{-A_1 \pi/2 }$ depending on $A_1$. Then, we measure the final state in the basis $\{|+\rangle, |-\rangle \}$ and set $A_2=0, 1$ upon getting $|+\rangle, |-\rangle$ respectively.

Inductively, in the $k$-th step, 
we prepare the input state $|+\rangle:=
\frac{1}{\sqrt{2}}(|0\rangle+|1\rangle)$,
and apply  $\Lambda_{\theta,p}$ for $2^{N-k+1}$ times.
Then, we apply $U_{-A_1  2^{-k+1} \pi-A_2 2^{-k+2}\pi - \cdots -A_{k-1} 2^{-1}\pi  }$ depending on $A_1, \cdots,A_{k-1}$.
Then, we measure the final state in the basis
$\{|+\rangle, |-\rangle \}$ and set
$A_k=0, 1$ upon getting $|+\rangle, |-\rangle$ respectively.
%We obtain the outcome  $A_k=0$ with $|+\rangle$ and $A_k=1$ with $|-\rangle$.

We repeat the above up to the $(N+1)$-th step.
%Then, as the final step, we make the $n+1$-th step as follows.
After the final step, depending on $A^{N+1}:=(A_1, \cdots, A_{N+1})$, we obtain the final estimate
$\hat{\theta}(A^{N+1}):= A_1  2^{-N} \pi-A_2 2^{-N+1}\pi + \cdots + A_k 2^{k-(N+1)}\pi
+ \cdots +A_{N}2^{-1} \pi +A_{N+1} \pi$. % and $x^{n+1}=(x_1, \cdots, x_{n+1})$, and
\begin{comment}
In the first step, 
we prepare the input state $|+\rangle:=
\frac{1}{\sqrt{2}}(|0\rangle+|1\rangle)$,
and apply the unknown channel $\Lambda_{\theta,p}$ with $2^n$ times.
Then, we measure the final state with the basis
$\{|+\rangle, |-\rangle \}$.
We obtain the outcome 
$A_1=0$ with $|+\rangle$
and
$A_1=1$ with $|-\rangle$.

In the second step, 
we prepare the input state $|+\rangle:=
\frac{1}{\sqrt{2}}(|0\rangle+|1\rangle)$,
and apply the unknown channel $\Lambda_{\theta,p}$ with $2^{n-1}$ times.
Then, we apply $U_{-a_1 \pi/2 }$.
Then, we measure the final state with the basis
$\{|+\rangle, |-\rangle \}$.
We obtain the outcome 
$A_2=0$ with $|+\rangle$
and
$A_2=1$ with $|-\rangle$.

In the $k$-th step, 
we prepare the input state $|+\rangle:=
\frac{1}{\sqrt{2}}(|0\rangle+|1\rangle)$,
and apply the unknown channel $\Lambda_{\theta,p}$ with $2^{n-k+1}$ times.
Then, we apply $U_{-A_1  2^{-k+1} \pi-a_2 A^{-k+2}\pi - \cdots -A_{k-1} 2^{-1}\pi  }$.
Then, we measure the final state with the basis
$\{|+\rangle, |-\rangle \}$.
We obtain the outcome 
$A_k=0$ with $|+\rangle$
and
$A_k=1$ with $|-\rangle$.

We repeat the above up to the $N+1$-th step.
%Then, as the final step, we make the $n+1$-th step as follows.
After the final step, the $n+1$-th step, we obtain the final estimate 
$\hat{\theta}(A^{N+1}):= A_1  2^{-N} \pi-A_2 2^{-N+1}\pi + \cdots 
+ A_k 2^{k-(N+1)}\pi
+ \cdots +A_{N}2^{-1} \pi +A_{N+1} \pi $ with
$A^{N+1}=(a_1, \cdots, a_{N+1})$. % and $x^{n+1}=(x_1, \cdots, x_{n+1})$, and
\end{comment}
\end{protocolc}

%\subsection{Case of $ p=0$}\Label{S8-2}
\subsection{Noiseless case}\Label{S8-2}
First consider the case $p=0$.
When $\theta=\ell 2^{-N+1}\pi$ for an integer $\ell$,
the above method can identify 
$\theta$ with probability $1$ \cite{Cleve}.
However, when the true parameter $\theta$ does not take the above discrete values, the situation is more complicated.
%For the analysis in the latter part, we introduce 
%the modified covariant estimator $\hat{\theta}'$.
For the analysis of this situation, we rewrite the above estimator. 
Let ${\cal H}$ be the Hilbert space spanned by 
$\{|x\rangle \}_{x=0}^{2^{N+1}-1}$.
We define the representation $F_\theta$ on ${\cal H}$
by
\begin{align}
F_\theta|x\rangle:= e^{i \theta x}|x\rangle.
\end{align}
Then, we consider the $n$-tensor product system ${\cal H}_1 \otimes \cdots\otimes {\cal H}_{N+1}$,
where each ${\cal H}_j$ is spanned by $|0\rangle, |1\rangle$.
Then, we define an isomorphism $V$ from 
${\cal H}_1 \otimes \cdots\otimes {\cal H}_{N+1}$
to ${\cal H}$ 
as follows:
\begin{align}
V: |x_1\rangle\cdots |x_{N+1}\rangle \mapsto \left| \sum_{j=1}^{N+1} 
x_j 2^{N+1-j} \right\rangle .
\end{align}

Therefore, the outcome of the above protocol has the same stochastic behavior
as the outcome of the following protocol.

\begin{protocol}\Label{P2}
Set the initial state $|\eta_{\rm uni}\rangle:=
2^{-(N+1)/2}
\sum_{x=0}^{2^{N+1}-1}|x\rangle$.
Then, apply the unitary $V^{-1} F_\theta$.
Finally, make measurements in the following way:
In the first step, measure the system ${\cal H}_1$ in the basis $\{|\pm \rangle\}$;
In the $k$-th step, measure the system ${\cal H}_{k}$ in the basis $\{|\pm \rangle\}$
after applying the unitary 
$U_{-A_1  2^{-k+1} \pi-A_2 2^{-k+2}\pi - \cdots - A_{k-1} 2^{-1}\pi  }$;
After the final step, the $(N+1)$-th step, we obtain the final estimate $\hat{\theta}(A^{N+1})$.
\end{protocol}

Applying $V$ to the measurement basis 
on ${\cal H}_1 \otimes \cdots\otimes {\cal H}_{N+1}$ in Protocol \ref{P2},
we obtain the measurement basis $\{ F_{y 2^{-(N+1)} \pi} | \eta_{\rm uni} \rangle\}_{y=0}^{2^{N+1}-1}$ on ${\cal H}$.
Hence, the outcome of the above protocol has the same stochastic behavior
as the outcome of the following protocol.

\begin{protocol}\Label{P3}
Set the initial state $| \eta_{\rm uni}\rangle$.
After applying the unitary $ F_\theta$,
make the measurement $\{ F_{y 2^{-(n+1)}\pi } | \eta_{\rm uni}\rangle\}_{y=0}^{2^{N+1}-1}$.
Then, our estimate $\hat{\theta}$ is set to be $y 2^{-(N+1)}$.
\end{protocol}

Now, under Protocol \ref{P3},
we assume that the unknown parameter $\theta$ is subject to the uniform distribution on $[0,2\pi)$.
The difference 
$Z:=\theta- \hat{\theta}$ is subject to the distribution with the following probability density function,
\begin{align}
&P_Z(z)\nonumber \\
:= &
\frac{1}{2\pi} \sum_{y=0}^{2^{N+1}-1} |\langle F_{y 2^{-(N+1)}\pi }^\dagger \eta_{\rm uni} | 
F_{z+y 2^{-(N+1)}\pi }\eta_{\rm uni} \rangle |^2
\nonumber \\
=&
\frac{1}{2\pi} \sum_{y=0}^{2^{N+1}-1} |\langle F_{-z}^\dagger \eta_{\rm uni} | F_0\eta_{\rm uni} \rangle |^2\nonumber \\
=&
\frac{2^{N+1}}{2\pi}  |\langle F_{-z}^\dagger \eta_{\rm uni} | F_0 \eta_{\rm uni}\rangle |^2.
\end{align}
The final term is the same as the probability density function 
of  difference between the estimate and the true parameter in the setting of Section \ref{S7-A}
with initial state $| \eta_{\rm uni}\rangle$.
That is, when we consider the uniform distribution for the unknown parameter $\theta$ and 
focus on the average,
our analysis is reduced to that in Section \ref{S7-A} with the group covariant estimator.
However, we stress that the group covariant estimator in Section \ref{S7-A} cannot be written in a form of an adaptive protocol with one-qubit memory.
Hence, to keep the above practical form of our estimator, 
we need to consider the averaged probability  
$\mathbb{E}_\theta [P_{\theta,0,2^{N+1}-1} \{ |\hat{\theta}-\theta| > c\}]$.
%The final equation holds with any parameter $\theta \in [0,2\pi)$.
\if0
In a practical viewpoint, it is more important to see the 
probability $P_{\theta,2^{n+1}-1} \{ |\hat{\theta}-\theta| > c\}$ with a certain threshold $c$
than the evaluation of MSE.
There are two choices for the threshold $c$.
One is the constant choice.
The other choice is the scaling choice for the threshold $c$, where
the limiting probability is a constant.
\fi
By using Eq.~\eqref{ZXO}, the tail probability is evaluated as
%The integer $n$ in \cite[Section 3]{IH09} corresponds to $2^{n-1}$ of our protocol.
%Hence, for the latter case, 
%the following can be shown by using the result in \cite[Section 3]{IH09};
\begin{align}
&\lim_{N \to \infty}\mathbb{E}_\theta 
\Big[P_{\theta,0,2^{N+1}-1} \Big\{ |\hat{\theta}-\theta| > \frac{c}{n}\Big\}\Big]\nonumber \\
=&\lim_{N \to \infty}\mathbb{E}_\theta 
\Big[P_{\theta,0,2^{N+1}-1} \Big\{ |\hat{\theta}-\theta| > \frac{c}{2^{N+1}-1}\Big\}\Big]\nonumber \\
=&\int_{|t| > c} \frac{\sin^2 t }{t^2}\frac{dt}{2\pi},
\Label{EQ1}
\end{align}
where $\mathbb{E}_\theta$ denotes the average with respect to the uniform prior.
Notice that the variance of the above limiting distribution is not finite.
Using Eq.~\eqref{XO1}, we have 
\begin{align}
&\lim_{N \to \infty}\mathbb{E}_\theta [\sin^2(\hat{\theta}-\theta)] n
=2 \cdot \lim_{N\to \infty}\mathbb{E}_\theta 
[\sin^2(\hat{\theta}-\theta)] 2^n\nonumber  \\
=&2 \cdot \frac{1}{2}
=\frac{\Si(2\pi)}{ \pi }. \Label{XO2}
\end{align}
That is, 
the Heisenberg scaling  in terms of average error
cannot be attained even in the noiseless case.

\if0
Hence, we can say that $\mathbb{E}_\theta MSE_\theta^{(N)} N^2 \to \infty$ even when $p=0$.
Further, the first case, i.e., the case with a constant threshold is not so easy.

In fact, the averaged probability  
$\mathbb{E}_\theta P_{\theta,0,2^{n+1}-1} \{ |\hat{\theta}-\theta| > c\}$
equals the tail probability
$P_{\theta,0,2^{n+1}-1} \{ |\hat{\theta}'-\theta| > c\}$
when $\hat{\theta}'$ is the estimate obtained by 
the modified covariant estimator, where 
 the modified covariant estimator can be obtained from the original estimator by considering 
the group covariant structure.
\fi

%\subsection{Case of non-zero $ p$}\Label{S8-3}
\subsection{Noisy case}\Label{S8-3}
%Next, we construct our practical estimator for the case with non-zero $ p$ by modifying the above estimator.
Next, we analyze the  case of non-zero $p$.
Let $\hat{A}_k$ 
%\rednew{[[use a different symbol, maybe $\hat{A}_k$, because Protocol \ref{P1} already involves noise]]} 
be the outcome of the $k$-th step in Protocol \ref{P1} in the noiseless case.
Again let $X_k$ be the variable that describes the error in the outcome of the $k$-th step.
That is, when the outcome of the $k$-th step is flipped, $X_k=1$. Otherwise, it is zero.  (Note that we denote $X^{N+1}:=(X_1, \cdots, X_{N+1})$ like for $A$.)
Hence, we obtain the outcome $A_k=\hat{A}_k \oplus X_k$ in the $k$-th step.
The probability that the correct unitary $U_\theta^{\otimes 2^{N-k+1}}$
acts in the $k$-th step is $(1-p)^{2^{N-k+1}}$.
When the correct unitary $U_\theta^{\otimes 2^{N-k+1}}$ does not act in the $k$-th step,
the outcome $A_k$ of the $k$-th step is subject to the uniform distribution, which implies that 
the outcome $A_k$ of the $k$-th step equals $\hat{A}_k$ with probability $1/2$.
Therefore, 
the probability $P_{X_k}(1)$ is characterized as  
\begin{align}
P_{X_k}(1)=\frac{1}{2}\Big(1-(1-p)^{2^{N-k+1}}\Big). \Label{MO1}
\end{align}
Let $\hat{\theta}$ be the estimate.
We denote the probability distribution of $\hat{\theta}$ when the true parameter is $\theta$ 
by $P_{\hat{\theta}|\theta}$, which is given by
\begin{align}
&P_{\hat{\theta}|\theta}( \hat{\theta}(a^{N+1}) %(a_1  2^{-n} +  \cdots +a_{n+1} )\pi 
) \nonumber \\
=&
%\sum_{x_1 \in \mathbb{F}_2} \cdots \sum_{x_{n+1} \in \mathbb{F}_2} 
\sum_{x^{N+1} \in \mathbb{F}_2^{N+1}}
P_{X_1}(x_1)\cdots P_{X_{n+1}}(x_{N+1}) \nonumber \\
&\cdot |\langle F_{
 ( (a_1\oplus x_1)  2^{-N} + \cdots +(a_{N+1}\oplus x_{N+1}) )\pi
 }^\dagger \eta_{\rm uni} | 
F_{\theta}\eta_{\rm uni}\rangle |^2 \nonumber \\
=&
\sum_{x^{N+1} \in \mathbb{F}_2^{N+1}}
%\sum_{x_1 \in \mathbb{F}_2} \cdots \sum_{x_{n+1} \in \mathbb{F}_2} 
P_{X_1}(x_1)\cdots P_{X_{N+1}}(x_{N+1})\nonumber \\
&\cdot |\langle F_{\hat{\theta}(a^{N+1})+{\tau}(a^{N+1},x^{N+1})
 }^\dagger \eta_{\rm uni} | 
F_{\theta}\eta_{\rm uni}\rangle |^2 , \label{a}
\end{align} 
where 
\begin{align}
&\tau(a^{N+1},x^{N+1})\nonumber\\:=& 
((-1)^{a_1}x_1  2^{-N} + \cdots +(-1)^{a_{N+1}}x_{N+1})\pi  ,
%\tau(A^{N+1}):= & (a_1  2^{-n} + \cdots +a_{n+1})\pi  .
\end{align} 
and
Eq.~\eqref{a} follows from the relation $a_k\oplus a_k=a_k + (-1)^{a_k}x_k$.

We also assume that $\theta$ is subject to the uniform distribution.
Hence, the joint distribution $P_{\hat{\theta},\theta}(\hat{\theta}_0,\theta_0)$ 
of $\hat{\theta}$ and $\theta$
is $ P_{\hat{\theta}|\theta_0 }(\hat{\theta}_0) \frac{1}{2\pi}$,
where $\hat{\theta}$ takes a discrete value and $\theta$ takes a continuous value.
Hence, the joint distribution of 
the difference $Z:=\theta- \hat{\theta}$ and $\hat{\theta}$ 
is given as
\begin{align}
P_{Z,\hat{\theta}}( z, \hat{\theta}_0)
= P_{\hat{\theta}|z+\hat{\theta}_0 }(\hat{\theta}_0) \frac{1}{2\pi}.
\end{align}
Thus, the difference $Z$ is subject to the distribution with the following probability density function;
%\begin{widetext}
\begin{align}
&P_Z(z)\nonumber \\
= &
\frac{1}{2\pi}
%\sum_{a_1 \in \mathbb{F}_2} \cdots \sum_{a_{n+1} \in \mathbb{F}_2} 
\sum_{a^{N+1} \in \mathbb{F}_2^{N+1}}
 P_{\hat{\theta}|z+\tau(a^{N+1}) }( \tau(a^{N+1}) ) 
\nonumber \\
= &
\frac{1}{2\pi}
\sum_{a^{N+1} \in \mathbb{F}_2^{N+1}}
\sum_{x^{N+1} \in \mathbb{F}_2^{N+1}}
%\sum_{a_1 \in \mathbb{F}_2} \cdots \sum_{a_{n+1} \in \mathbb{F}_2} 
%\sum_{x_1 \in \mathbb{F}_2} \cdots \sum_{x_{n+1} \in \mathbb{F}_2} 
P_{X_1}(x_1)\cdots P_{X_{N+1}}(x_{N+1})
\nonumber \\
&\cdot  |\langle F_{
\hat{\theta}(a^{N+1})+{\tau}(a^{N+1},x^{N+1})
%  ( (a_1+ (-1)^{a_1}x_1)  2^{-n} + \cdots +(a_{n+1}+ (-1)^{a_{n+1}}x_{n+1}) )\pi
 }^\dagger\eta_{\rm uni} | 
F_{z+\hat{\theta}(a^{N+1})
%(a_1  2^{-n} +  \cdots +a_{n+1} )\pi 
}\eta_{\rm uni}\rangle |^2
\nonumber \\
= &
\frac{1}{2\pi}
\sum_{a^{N+1} \in \mathbb{F}_2^{N+1}}
\sum_{x^{N+1} \in \mathbb{F}_2^{N+1}}
%\sum_{a_1 \in \mathbb{F}_2} \cdots \sum_{a_{n+1} \in \mathbb{F}_2} 
%\sum_{x_1 \in \mathbb{F}_2} \cdots \sum_{x_{n+1} \in \mathbb{F}_2} 
P_{X_1}(x_1)\cdots P_{X_{N+1}}(x_{N+1})
\nonumber \\
&\cdot  |\langle F_{
(   -z+ \tau(a^{N+1},x^{N+1})}^\dagger \eta_{\rm uni} | 
F_{0}\eta_{\rm uni}\rangle |^2\nonumber \\
= &
\frac{2^{N+1}}{2\pi}
\sum_{a^{N+1} \in \mathbb{F}_2^{N+1}}
%\sum_{a_1 \in \mathbb{F}_2} \cdots \sum_{a_{n+1} \in \mathbb{F}_2} 
\frac{1}{2^{N+1}}
\sum_{x^{N+1} \in \mathbb{F}_2^{N+1}}
%\sum_{x_1 \in \mathbb{F}_2} \cdots \sum_{x_{n+1} \in \mathbb{F}_2} 
P_{X_1}(x_1)\cdots P_{X_{N+1}}(x_{N+1})
\nonumber \\
&\cdot  |\langle F_{
-z+ \tau(a^{N+1},x^{N+1}) }^\dagger \eta_{\rm uni} | 
F_{0}\eta_{\rm uni}\rangle |^2.\Label{NOU}
\end{align}
%\end{widetext}

Now, we define the random variable $Z_0$ subject to the probability density function 
$|\langle F_{z_0}^\dagger I |  \eta_{\rm uni}\rangle |^2 
\frac{d z_0}{2\pi} $.
Hence, Eq.~\eqref{NOU} guarantees that the difference $Z$ is characterized as 
\begin{align}
Z=Z_0 - \tau(A^{N+1},X^{N+1}),\Label{CA1}
\end{align}
where the binary variables $A_1, \cdots, A_{N+1} \in \mathbb{F}_2$ 
are independent binary variable subject to the uniform distribution
and the binary variable $X_k$ is subject to the distribution Eq.~\eqref{MO1}.

%\subsection{Asymptotic analysis for the case of $ p=\frac{\epsilon}{N}$}\Label{S8-4}
Now, we consider the case when $p=\frac{\epsilon}{n}=\frac{\epsilon}{2^{N+1}-1}$.
Hence, the probability $P_{X_k}(1)$ given in Eq.~\eqref{MO1} converges to 
$\frac{1}{2}(1- e^{-\epsilon 2^{-k}})$.
Hence, in the following, we consider that the binary variables
$X^{N+1}=(X_1, \cdots, X_k, \cdots, X_{N+1})$ are independently subject to the following distribution:
\begin{align}
P_{X_k}(1)=\frac{1}{2} \big(1-e^{-\epsilon 2^{-k}}\big),\quad
P_{X_k}(0)=\frac{1}{2} \big(1+e^{-\epsilon 2^{-k}}\big).
\Label{AJ1}
\end{align}
Hence,  $X_k$ with large $k$ can be ignored because  
$\frac{1}{2} \big(1-e^{-\epsilon 2^{-k}}\big)
\cong \epsilon 2^{-k-1}$ goes to zero as $k$ goes to infinity.
Also, the binary variables $\hat{A}^{N+1}=(\hat{A}_1, \cdots, \hat{A}_k, \cdots, \hat{A}_{N+1})$ are other independent binary variables subject to the uniform distribution.
So we have
\begin{align}
& \lim_{N \to \infty}\mathbb{E}_\theta \Big[P_{\theta,\frac{\epsilon}{n},2^{N+1}-1} 
\Big\{ \frac{a}{n}\le \hat{\theta}-\theta \le \frac{b}{n}\Big\}\Big] \nonumber \\
=&
\lim_{N \to \infty}
\mathbb{E}_{\hat{A}^{N+1},X^{N+1}}
\bigg[\int_{b- \zeta(\hat{A}^{N+1},X^{N+1})}^{a-  \zeta(\hat{A}^{N+1},X^{N+1})}
 \frac{\sin^2 y }{y^2}\frac{dy}{2\pi}  \bigg]\nonumber \\
=&
\lim_{n \to \infty}
\mathbb{E}_{\hat{A}^{n+1},X^{n+1}}
\bigg[\int_{b}^{a}
 \frac{\sin^2 (y+\zeta(\hat{A}^{N+1},X^{N+1})) }{(y+\zeta(\hat{A}^{N+1},X^{N+1}))^2}\frac{dy}{2\pi}\bigg] \nonumber \\
 =&\lim_{N \to \infty}
\mathbb{E}_{\hat{A}^{N+1},X^{N+1}}
\bigg[\int_{b}^{a}
 \frac{\sin^2 y }{(y+\zeta(\hat{A}^{N+1},X^{N+1}))^2}\frac{dy}{2\pi}\bigg],
 \Label{EY1}
\end{align}
where
$\zeta(\hat{A}^{N+1},X^{N+1}):= 
((-1)^{\hat{A}_1}2 X_1   + \cdots +(-1)^{\hat{A}_{N+1}}2^{N+1} X_{N+1})\pi $.
The above equation shows that 
the proposed estimator
achieves the Heisenberg scaling in terms of limiting distribution.

Also, the asymptotic behavior of the average error is calculated as
\begin{align}
&\lim_{N \to \infty}n \mathbb{E} [Z^2] 
-\frac{\Si(2\pi)}{ \pi } \nonumber \\
\stackrel{\text{(a)}}{=}&\lim_{N \to \infty}n \mathbb{E}
\Big[\Big(Z_0 - 
\tau(\hat{A}^{N+1},X^{N+1})
%((-1)^{A_1}X_1  2^{-n} + \cdots +(-1)^{A_{n+1}}X_{n+1})\pi 
\Big)^2-Z_0^2\Big] \nonumber \\
\stackrel{\text{(b)}}{=}&\lim_{N \to \infty}
n \mathbb{E}
\Big[\Big( Z_0^2 +\sum_{k=1}^{N+1} X_k^2  2^{2k-2-2N} 
\pi^2  \Big)-Z_0^2\Big]\nonumber \\
=&\lim_{N \to \infty}2^{N+1} 
\Big( \sum_{k=1}^{N+1} (1-e^{-\epsilon 2^{-k}})  2^{2k-3-2N} \pi^2  \Big) \nonumber \\
=&\lim_{N \to \infty} 
 \sum_{k=1}^{N+1} (1-e^{-\epsilon 2^{-k}})  2^{2k-2-N} \pi^2 \nonumber \\
=&\lim_{N \to \infty}  \sum_{k=1}^{N+1} 
\sum_{l=1}^\infty\frac{(-1)^{l+1} \epsilon^l 2^{-kl} }{l !}
  2^{2k-2-N} \pi^2  \nonumber \\
=&\lim_{N \to \infty} \sum_{l=1}^\infty \sum_{k=1}^{N+1} 
\frac{(-1)^{l+1} \epsilon^l }{l !}
  2^{(2-l)k-2-N} \pi^2 \nonumber \\
=&\lim_{N \to \infty}
\sum_{l=1}^\infty\frac{(-1)^{l+1} \epsilon^l   2^{-2-N} \pi^2 }{l !}
  \frac{2^{(N+2)(2-l)}-2^{2-l}}{ 2^{2-l}-1 }
 \nonumber \\
\stackrel{\text{(c)}}{=}
&
\epsilon  2^{-2} \pi^2 
  \frac{2^{2}}{ 2-1 }
=\epsilon  \pi^2 ,
\end{align}
where $(a)$ follows from Eq.~\eqref{CA1} and Eq.~\eqref{XO2},
$(b)$ follows from the independence and the uniformity of $A_k$, and in addition,
$(c)$ holds because the terms in $\sum_{l=1}^\infty$ vanish when $l \ge 2$.

\begin{comment}

\bibitem{W}
Vishal Katariya, Mark M. Wilde,
``RLD Fisher Information Bound for Multiparameter Estimation of Quantum Channels,''
arXiv:2008.11178.

\end{comment}

\end{document}